\numberwithin{equation}{section}
\numberwithin{figure}{section}
\definecolor{skyblue}{rgb}{0.85,0.85,1}
\renewcommand\subsection{\@startsection{subsection}{2}%
  \z@{-.5\linespacing\@plus-.7\linespacing}{.5\linespacing}%
  {\normalfont\bfseries}}
\renewcommand\subsubsection{\@startsection{subsubsection}{3}%
  \z@{.5\linespacing\@plus.7\linespacing}{-.5em}%
  {\normalfont\scshape}}
\newtheorem{theorem}{Theorem}[section]
\newtheorem{lemma}[theorem]{Lemma}
\newtheorem{cor}[theorem]{Corollary}
\theoremstyle{definition}
\newtheorem{define}[theorem]{Definition}
\theoremstyle{remark}
\newtheorem{rem}[theorem]{Remark}
\DeclareMathOperator{\divv}{div}
\DeclareMathOperator{\Tr}{Tr}
\newcommand{\bbR}{\mathbb{R}}
\newcommand{\vectorbasis}[1]{\frac{\partial}{\partial #1}}
\newcommand{\dd}{\mathop{}\!\mathrm{d}}
\newcommand{\cL}{\mathcal{L}}
\newcommand{\cS}{\mathcal{S}}
\newcommand{\cT}{\mathcal{T}}
\newcommand{\p}{\partial}
\newcommand{\nor}{\hat N}
\newcommand{\LX}{\cL_{_X}}
\newcommand{\LS}{L_\cS}
\newcommand{\prin}{\lambda_1}
\newcommand{\ef}{\psi_{_X}}
\begin{document}

\title[Symmetry and instability of marginally outer trapped surfaces]{Symmetry and instability of \\ marginally outer trapped surfaces}

\author{Ivan Booth}
\email{ibooth@mun.ca}

\author{Graham Cox}
\email{gcox@mun.ca}

\author{Juan Margalef-Bentabol}
\email{juanmargalef@mun.ca}
\address{Department of Mathematics and Statistics, Memorial University of Newfoundland, St. John's, NL A1C 5S7, Canada}
\address{Departamento de Matemáticas, Universidad Carlos III de Madrid, Avda. de la Universidad 30, 28911 Leganés, Spain}

\begin{abstract}
We consider an initial data set having a continuous symmetry and a marginally outer trapped surface (MOTS) that is not preserved by this symmetry. We show that such a MOTS is unstable except in an exceptional case. In non-rotating cases we provide a Courant-type lower bound on the number of unstable eigenvalues. These results are then used to prove the instability of a large class of exotic MOTSs that were recently observed in the Schwarzschild spacetime. We also discuss the implications for the apparent horizon in data sets with translational symmetry.
\end{abstract}

\maketitle

\section{Introduction}

\subsection{Background}
In mathematical and numerical relativity, the boundaries of black holes are usually characterized by marginally outer trapped surfaces (MOTSs): closed, spacelike two-surfaces of vanishing outward null expansion. The best known of these is the  \emph{apparent horizon}: the boundary of the total trapped region in a (spacelike) Cauchy surface $\Sigma$ of a full spacetime $\mathcal M$ \cite{Wald}. However, 
MOTSs are more than just apparent horizons. 
They can have complicated self-intersecting geometries 
\cite{Booth:2020qhb,Booth:2022vwo,Hennigar:2021ogw,Pook-Kolb:2019iao,Pook-Kolb:2019ssg,Pook-Kolb:2021jpd,Pook-Kolb:2021gsh}
and non-spherical topologies \cite{Flores_2010,Karkowski:2017sqk,Mach:2017peu,Newman_1987,Pook-Kolb:2021jpd}. 
Such exotic surfaces have also been shown to play a key role in black hole mergers, with a complicated series of MOTS pair creations and annihilations ultimately destroying the original pair of apparent horizons and resulting in a single final apparent horizon \cite{Pook-Kolb:2021jpd,Pook-Kolb:2021gsh}.

The MOTS stability operator plays an important role in 
understanding these surfaces. 
For close to four decades, versions of it have been used  
to characterize whether a particular MOTS (locally) separates outer trapped from untrapped regions and hence can be thought of as a black hole boundary\cite{Newman_1987, Hayward:1993wb, Booth:2006bn, Cao:2010vj}. However it was only more 
recently that a particular version was explicitly understood as a
close analogue of the stability operator for minimal surfaces in Riemannian geometry\cite{AMS2005,AMS2007} and its properties as an elliptic operator were
exploited to better understand this characterization. 
That version considered deformations of MOTSs within the ``instant of time'' $\Sigma$. Geometrically, if the outward null expansion of $\mathcal{S}$ is $\theta_+$, then the MOTS stability operator $L_{\mathcal{S}}$ is defined as a deformation

\begin{equation}
      L_{\cS} \psi = \delta_{Y} \theta_+ \label{stabdef}
\end{equation}
with respect to any vector field $Y$ on $\Sigma$, where $\psi = g(Y,\nor)$ is the component of $Y$ normal to $\cS$ ($\nor$ is the unit normal vector to $\cS$ in $\Sigma$). As for minimal surfaces, $L_\cS$ is an elliptic second-order differential operator. The MOTS stability operator is, in general, not self-adjoint; nonetheless, its \emph{principal eigenvalue} $\prin$ (the eigenvalue with smallest real part) is always real, and the corresponding eigenfunction vanishes nowhere \cite{AMS2005}.

\begin{figure}[ht!]
    \centering
    \includegraphics{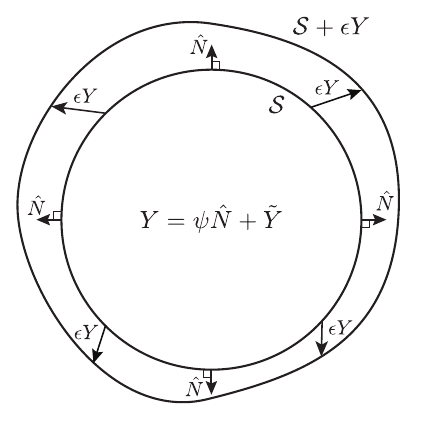}
    \caption{The MOTS stability operator $L_\cS$  measures the change of the outward expansion $\theta_+$ if $\cS$ is deformed in the $Y$ direction. In the figure $\tilde{Y} \in T \cS$.}
    \label{Def}
\end{figure}

A schematic of a deformation is shown in Figure \ref{Def} and can be used to understand the stability classification of MOTSs. $\cS$ is said to be \emph{strictly stable} or \emph{marginally stable} respectively if there exists a not everywhere vanishing, non-negative  $\psi$ such that $\LS \psi > 0$ or $\LS \psi = 0$. It is \emph{unstable} if no such $\psi$ exists. It was shown in \cite{AMS2005,AMS2007} that this classification is equivalently determined by $\prin >0$ (strictly stable), $\prin =0$ (marginally stable) or $\prin<0$ (unstable). Further, a strictly stable  MOTS in a slice $\Sigma$ forms a boundary between trapped and untrapped regions in that slice, in the sense that there is a two-sided neighbourhood $\mathcal{N}(\cS) \subset \Sigma$ that contains no complete outer trapped surfaces outside $\cS$ and no complete outer untrapped surfaces inside. Thus, it can usefully be thought of as a generalization (and more computationally useful version) of the apparent horizon in $\Sigma$.

All exotic MOTSs so far observed in either exact or numerical solutions have been found to be unstable. In this paper, we show why this must be true for a wide class of MOTSs found in spacelike slices of highly symmetric spacetimes.

\begin{rem}
    Our results hold when there are symmetries of $\Sigma$ that are
    not shared with the MOTS. They do not address exotic MOTSs 
    that have 
    the same symmetries as the slice in which they are embedded 
    nor do they cover 
    MOTSs found in non-symmetric slices of highly symmetric spacetimes. As such they do not address such examples as
    the exotic MOTSs found in slices of the Kerr spacetime \cite{Booth:2022vwo}, MOTSs found in non-symmetric slicings
    of Vaidya \cite{Nielsen:2010wq} or the exotic MOTSs found
    in three-black-hole initial data \cite{PK_thesis}. 
    At least in their current form, they also do not address MOTSs that might be found in  Killing initial data (KID)\cite{Beig_1997}, that is, initial data
    that evolves into a symmetric spacetime even though it does not share those symmetries.
\end{rem}

\subsection{Main results}

Consider an $n$-dimensional orientable 
Riemannian manifold $(\Sigma,g)$ and let $K$ be an arbitrary symmetric $(0,2)$-tensor on $\Sigma$. In this setting, a MOTS is defined in the following way.

 
\begin{define}
    Let $\cS$ be a closed, orientable $(n-1)$-dimensional manifold that is smoothly immersed in $\Sigma$,
    with outward oriented unit normal $\nor$ and induced metric $q$. The \emph{outward null expansion} of $\cS$ is
    \begin{equation}
    \label{theta+}
        \theta_+ = \Tr_q(K) +H = \Tr_g(K) - K(\nor,\nor)  + H,
    \end{equation}
    where $\Tr_q$ and $\Tr_g$ are the trace with respect to $q$ and $g$, respectively, while $H$ is the mean curvature of $\cS$ in $\Sigma$. If $\theta_+=0$, then $\cS$ is a \emph{marginally 
    outer trapped surface (MOTS)}. 

\end{define}

\begin{rem} All of the results that
follow are valid in this level of generality. However, for
economy of notation as well as geometric insight, it is 
often convenient to consider $\Sigma$ as embedded in an 
$(n+1)$-dimensional spacetime. In this case, $K$ is the extrinsic
curvature of $\Sigma$. The most important applications
come from general relativity and so we
will often refer to $(\Sigma,g, K)$ as an \emph{initial data set}, but we emphasize that our results do not require that $(g,K)$ satisfy
the constraint equations or, equivalently, that the ambient 
spacetime be a solution of the Einstein equations.
\end{rem}

\begin{rem} There are a few points to note in this definition:
\begin{enumerate}[(a)]
    \item $\cS$ is immersed, not necessarily embedded: self-intersections are allowed and expected, given the discoveries of \cite{Pook-Kolb:2019iao,Booth:2020qhb,Pook-Kolb:2021gsh} and subsequent work. 
    \item
    Relative to a full spacetime, null normals to $\cS$ can be written as
\begin{align}\label{eq:l=u+N}
    \ell^+ =  \hat{u} +\nor  \qquad\qquad\ell^- = \hat{u}- \nor 
\end{align}
where $\hat{u}$ is the future-pointing unit normal to $\Sigma$ in the full spacetime and $\nor$ remains as the 
outward-pointing spacelike normal to $\cS$ in $\Sigma$. Then the outward null expansion is equivalently
\begin{align}
    \theta_+ = \Tr_q (k_+)
\end{align}
where $k_+$ is the pull-back of the (spacetime) covariant derivative of
$\ell_+$ to $\cS$. In other words, $\theta_+$ is the mean curvature of $\cS$ in the outward
null direction. The choice of scaling for $\ell_+$ follows the convention of \cite{AMS2005}, which will facilitate the comparison with results from that paper.

\item The ``O'' in MOTS is a hold-over from a time when all known MOTSs were embedded surfaces which had clear insides and outsides. Now, with the recognition that these surfaces can have much more complicated geometries, ``outer'' is often understood simply as a label which may or may not clearly correspond to an 
``outward'' direction.
\end{enumerate}
\end{rem}

Next, we define a symmetry of $(\Sigma,g,K)$.

\begin{define}
\label{def:symm}
A non-trivial vector field $X$ on $\Sigma$ is a \emph{symmetry of $(\Sigma,g,K)$} if $\LX g =\LX K = 0$. It is a \emph{symmetry of a surface $\cS$} if, in addition, it is everywhere tangent to $\cS$.
\end{define}

If $X$ is a symmetry of a dynamical horizon $\Sigma$, then it must also be a symmetry of each of its marginally outer trapped leaves, by \cite[Prop.~6.4]{AG2005}.
Similarly, it was observed in \cite{AMS2007} that if $X$ is a symmetry of an initial data set and $\cS$ is a strictly stable MOTS, then $X$ must also be a symmetry of $\cS$. The converse is not true\,---\,knowing that $X$ is a symmetry of the initial data set and of $\cS$ does not determine the stability of $\cS$. For example, the outer and inner horizons of Reissner--N\"ordstrom black hole spacetimes share the spherical symmetry of the full spacetime but, away from extremality, the outer horizon is strictly stable while the inner horizon is unstable \cite{BKO17, Booth:2021sow}. In the extremal case, the horizons coincide and are marginally stable.

We are interested in the situation where $X$ is \emph{not} a symmetry of $\cS$. This means $\cS$ is \emph{less symmetric than the initial data set containing it}, since it is not invariant under the flow generated by $X$. In this case, $\cS$ is either unstable or marginally stable. The following simple criterion allows us to distinguish between these cases.

\begin{theorem}
\label{thm:main}
Suppose $\cS$ is a MOTS and $X$ is a symmetry of $(\Sigma,g,K)$ but not of $\cS$. Then $0$ is an eigenvalue of $L_\cS$, with eigenfunction $\ef = g(X,\nor)$. Moreover,
\begin{enumerate}
    \item $\cS$ is marginally stable if and only if $X$ is nowhere tangent to $\cS$,
    \item $\cS$ is unstable if and only if $X$ is tangent to $\cS$ at some point.
\end{enumerate}
\end{theorem}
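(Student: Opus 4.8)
The plan is to combine the variational definition \eqref{stabdef} of $\LS$ with the spectral properties of the principal eigenvalue $\prin$ recorded above. First I would show that $\ef$ lies in the kernel of $\LS$. Let $\phi_t$ denote the flow of $X$. Since $\LX g = \LX K = 0$, each $\phi_t$ is an isometry of $(\Sigma,g)$ that also preserves $K$, so the displaced surfaces $\cS_t := \phi_t(\cS)$ carry induced data isometric to that of $\cS$. As $\theta_+$ is assembled from $g$, $K$, $\nor$, and $H$ exactly as in \eqref{theta+}, this invariance forces $\theta_+(\cS_t) \equiv 0$ for every $t$; that is, the flow of a symmetry carries MOTSs to MOTSs. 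Differentiating at $t=0$ gives $\delta_X \theta_+ = 0$. The deformation vector here is $X$, whose normal component is $g(X,\nor)=\ef$, so \eqref{stabdef} with $Y=X$ reads $\LS \ef = \delta_X \theta_+ = 0$. Because $X$ is not a symmetry of $\cS$ it is not everywhere tangent, hence $\ef \not\equiv 0$, and $0$ is genuinely an eigenvalue of $\LS$ with eigenfunction $\ef$.

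Next I would read off the stability type from the zero set of $\ef$. Since $0$ is an eigenvalue and $\prin$ is the (real) eigenvalue of least real part, we have $\prin \le 0$; thus $\cS$ is either marginally stable ($\prin=0$) or unstable ($\prin<0$), strict stability being excluded. I would then invoke two properties of $\prin$: it is simple, and it is the unique eigenvalue of $\LS$ admitting a nowhere-vanishing eigenfunction (a standard property of the principal eigenpair, cf.\ \cite{AMS2005}). If $X$ is nowhere tangent to $\cS$, then $\ef=g(X,\nor)$ has no zeros; being a nowhere-vanishing eigenfunction for the eigenvalue $0$, it forces $0=\prin$, so $\cS$ is marginally stable. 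If instead $X$ is tangent to $\cS$ at some point $p$, then $\ef(p)=0$ while $\ef\not\equiv 0$; were $\prin=0$, simplicity would make $\ef$ a scalar multiple of the nowhere-vanishing principal eigenfunction, contradicting $\ef(p)=0$. Hence $\prin\neq 0$, and with $\prin\le 0$ this gives $\prin<0$, so $\cS$ is unstable. The two geometric alternatives (nowhere tangent / tangent somewhere) are complementary and exhaustive and match the two analytic ones (marginally stable / unstable), which yields both biconditionals.

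The step needing the most care is the identification $\LS\ef = \delta_X\theta_+$: the definition \eqref{stabdef} pairs $\LS$ with only the normal part of a deformation, whereas $X$ generally has a nonzero tangential component along $\cS$ (indeed it must, wherever it is tangent). This is resolved by noting that a purely tangential deformation alters $\theta_+$ only through its directional derivative along $\cS$, which vanishes identically because $\theta_+\equiv 0$ on a MOTS; thus only the normal component $\ef$ contributes, and the geometric computation $\delta_X\theta_+=0$ is consistent with the operator identity. Everything else is a direct application of the stated properties of $\prin$.
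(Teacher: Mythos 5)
Your proposal is correct and takes essentially the same route as the paper: the flow of the symmetry $X$ carries MOTSs to MOTSs, giving $\LS\ef=\delta_X\theta_+=0$ with $\ef\not\equiv 0$, and the stability dichotomy is then read off from the sign and simplicity properties of the principal eigenpair. The only difference is presentational: where you cite as standard the fact that $\prin$ is the unique eigenvalue admitting a nowhere-vanishing eigenfunction, the paper proves exactly this by pairing $\ef$ against the sign-definite principal eigenfunction of the adjoint $L_{\cS}^*$, which yields $\prin\int_{\cS}\phi\,\ef=\int_{\cS}\phi\,\LS\ef=0$ and hence that $\ef$ changes sign whenever $\prin<0$.
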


As we will see in \Cref{sec:spherical}, the theorem implies that any non-spherical MOTS in a spherically symmetric slice of Schwarzschild is unstable. A large family of such MOTSs were found numerically in \cite{Booth:2020qhb}, and will be analyzed in Section \ref{sec:Sch} using the above theorem.


The marginally stable case will be discussed further in \Cref{sec:marginal}. For now, we just mention the example
\begin{equation}
    \label{ex:marginal}
    \Sigma = \bbR \times \cS_0, \qquad g = dr^2 + g_0,
\end{equation}
where $(\cS_0, g_0)$ is a compact, orientable Riemannian manifold. For any $r_0$ the hypersurface $\cS = \{r_0\} \times \cS_0$ is a minimal surface and hence a MOTS in  $(\Sigma,g,0)$. The vector field $X = \vectorbasis{r}$ is a symmetry of this initial data set that is nowhere tangent to $\cS$, therefore $\cS$ is marginally stable.

\begin{rem}
An important instance of \Cref{thm:main} is when $\Sigma$ is a spacelike slice in a spacetime $(\mathcal M, \bar g)$, with $g$ and $K$ being the induced metric and second fundamental form. If $\bar X$ is a Killing vector field on $(\mathcal M,\bar g)$ that is tangent to $\Sigma$, then it is a symmetry of $(\Sigma,g,K)$; cf. \cite[Cor.~1]{CM09}.
\end{rem}

If $\cS$ is the boundary of a compact region in $\Sigma$, we can give simple conditions under which the instability criteria in \Cref{thm:main} are satisfied.

\begin{theorem}
\label{cor:boundary}
Suppose $\cS$ is an embedded MOTS and $X$ is a symmetry of $(\Sigma,g,K)$. If $\cS$ bounds a compact region $A \subset \Sigma$, then it is unstable under any of the following conditions:
\begin{enumerate}
    \item $X$ is not a symmetry of $\cS$; \label{cor1}
    \item $X$ is a coordinate vector field\footnote{This means $X = \vectorbasis{x^n}$, where $(x^1, \ldots, x^n)$ are coordinates defined on an open set containing $A$. For instance, $\vectorbasis{\phi}$ is a smooth, globally defined vector field on $\bbR^3$,
    but it is only a coordinate vector field on the open set $\{\phi \neq 0\}$.} on $A$; \label{cor2}
    \item The Euler characteristic $\chi(\cS)$ is nonzero and $X$ has no zeros in $A$. \label{cor3}
\end{enumerate}
\end{theorem}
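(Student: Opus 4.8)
The plan is to derive all three statements from \Cref{thm:main}. That theorem applies precisely when $X$ is a symmetry of $(\Sigma,g,K)$ but \emph{not} of $\cS$, and it then pins down instability by the existence of a point where $X$ is tangent to $\cS$. Writing $\ef = g(X,\nor)$ for the normal component of $X$ along $\cS$, the statement ``$X$ is a symmetry of $\cS$'' is exactly $\ef\equiv 0$ (everywhere tangent), while ``$X$ is tangent to $\cS$ at some point'' is exactly that $\ef$ has a zero. So in each case it suffices to show that $\ef\not\equiv 0$ (so that \Cref{thm:main} applies and excludes $X$ being a symmetry of $\cS$) together with the fact that $\ef$ vanishes somewhere (so that we land in case~(2), instability, rather than the marginally stable case~(1)).

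The key analytic input I would use is that a symmetry is a Killing field and hence divergence-free, $\divv X = \tfrac12\,\Tr_g(\LX g)=0$. Since $\cS=\partial A$ with $A$ compact, the divergence theorem gives
\begin{equation*}
    \int_\cS \ef \, dA = \int_A \divv X \, dV = 0 ,
\end{equation*}
with $\nor$ taken as the outward unit normal of $A$. Consequently, once we know $\ef\not\equiv 0$ on the connected surface $\cS$, the vanishing of this integral forces $\ef$ to change sign and therefore to have a zero. Thus the whole theorem reduces to verifying $\ef\not\equiv 0$ under each hypothesis, after which \Cref{thm:main}(2) delivers instability.

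It then remains to check $\ef\not\equiv0$ case by case. For \eqref{cor1} this is immediate: given that $X$ is a symmetry of the data, the hypothesis that it is not a symmetry of $\cS$ means precisely that $X$ is not everywhere tangent to $\cS$, i.e. $\ef\not\equiv 0$. For \eqref{cor2}, a coordinate vector field $X=\vectorbasis{x^n}$ is nowhere zero; restricting $x^n$ to the compact $\cS$ and evaluating at a maximum $p$, the tangential derivative vanishes so $T_p\cS=\ker(dx^n)_p$, while $dx^n(X)\equiv 1$ shows $X_p\notin T_p\cS$ and hence $\ef(p)\neq0$. (Equivalently, at a maximum of a transverse coordinate $x^i$, $i<n$, one gets $X_p\in T_p\cS$, exhibiting a tangency directly, so here one can bypass the divergence theorem entirely.) For \eqref{cor3}, if $\ef\equiv0$ then $X$ is everywhere tangent to $\cS$ and, having no zeros in $A\supseteq\cS$, restricts to a nowhere-vanishing field on $\cS$; by Poincar\'e--Hopf this forces $\chi(\cS)=0$, a contradiction. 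Alternatively, \eqref{cor3} can be argued purely topologically: were $X$ nowhere tangent to the connected $\cS$, it would point uniformly outward (or inward) along $\partial A$, and Poincar\'e--Hopf for manifolds with boundary combined with $\chi(\cS)=2\chi(A)$ would again give $\chi(\cS)=0$.

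I expect the main obstacle to be the passage from ``$\int_\cS\ef=0$ and $\ef\not\equiv0$'' to ``$\ef$ has a zero'', which genuinely relies on $\cS$ being connected: for a disconnected $\cS$ the function $\ef$ could be a nonzero constant sign on each component with cancelling integrals, and indeed the product example $\Sigma=\bbR\times\cS_0$ with two parallel slices as boundary and $X=\vectorbasis{r}$ is disconnected, nowhere tangent, and \emph{marginally stable}. So the hypotheses should be read with $\cS$ connected (as is natural for a single embedded MOTS bounding a region), and the real work is excluding the nowhere-tangent alternative---via the sign change for \eqref{cor1}--\eqref{cor2} and via the Euler-characteristic obstruction for \eqref{cor3}. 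The only remaining points needing care are orientation bookkeeping (that the MOTS normal $\nor$ coincides with the outward normal of $A$) and the precise form of the boundary Poincar\'e--Hopf formula, including that $\chi(\cS)\neq0$ already forces $\dim\cS$ even and hence the doubling identity $\chi(\cS)=2\chi(A)$.
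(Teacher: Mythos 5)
Your proposal is correct and shares the paper's overall skeleton: reduce everything to \Cref{thm:main} by showing that $\ef = g(X,\nor)$ is not identically zero yet vanishes somewhere, with condition \cref{cor1} handled exactly as in the paper, via $\divv X = 0$ and the divergence theorem (the paper's \Cref{lem:div}). You diverge in the other two cases. For \cref{cor2}, the paper establishes $\ef \not\equiv 0$ by integrating the auxiliary field $Y = x^n X$, whose divergence is $1$ (\Cref{lem:coord}), whereas you evaluate at a maximum of $x^n|_{\cS}$ to exhibit a point of non-tangency; your parenthetical variant (a maximum of a transverse coordinate $x^i$, $i<n$, where $X_p \in \ker(dx^i)_p = T_p\cS$) even produces the tangency point directly, bypassing the divergence theorem entirely --- this is the same device the paper itself uses in proving \Cref{thm:sphere}, so it is a legitimate and arguably more elementary route. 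For \cref{cor3}, the paper runs Poincar\'e--Hopf on the bounded region $A$ (\Cref{lem:PHopf}) to obtain the sign change of $\ef$ \emph{without using that $X$ is Killing}, a feature it exploits later: the proof of \Cref{thm:3D} perturbs $X$ inside $A$ (destroying the Killing property) and then re-applies \Cref{lem:PHopf}. Your primary route instead applies closed-manifold Poincar\'e--Hopf on $\cS$ to rule out $\ef \equiv 0$ and then falls back on the Killing/divergence identity for the zero; this is perfectly adequate for the theorem at hand (Killing is hypothesized) but would not feed into the perturbation argument for \Cref{thm:3D}, while your alternative sketch via boundary Poincar\'e--Hopf and $\chi(\cS) = 2\chi(A)$ recovers the paper's argument. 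Finally, your connectedness caveat is well taken: passing from ``$\int_{\cS}\ef = 0$ and $\ef \not\equiv 0$'' (or from ``$\ef$ takes both signs'') to ``$\ef$ vanishes somewhere'' genuinely requires $\cS$ connected, the same implicit assumption is present in the paper's \Cref{lem:div} and \Cref{lem:PHopf}, and your two-slice example in $\bbR \times \cS_0$ shows the conclusion can fail otherwise.
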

 
In particular, \cref{cor2} implies that for any initial data set with translational symmetry, all MOTSs are unstable and therefore the boundary of the trapped region is not a MOTS (recalling that our definition requires a MOTS to be compact); see \Cref{sec:translation}. In \Cref{sec:FLRW}, we give an example where the trapped region is all of $\Sigma$ and thus the boundary is empty.

In three dimensions we get a stronger result.

\begin{theorem}
\label{thm:3D}
Let $X$ be a symmetry of a three-dimensional initial data set $(\Sigma,g,K)$. If $\cS$ is a stable MOTS that bounds a compact region in $\Sigma$ and has $\chi(\cS) \neq 0$, then it must intersect the zero set of $X$.
\end{theorem}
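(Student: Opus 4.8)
The plan is to deduce a topological obstruction on $\cS$ from stability, and then invoke the Poincaré–Hopf theorem. The first step uses that $\cS$ bounds a compact region $A$ together with the already-established \Cref{cor:boundary}. Because $\cS$ is stable it is not unstable, so the instability criterion in the first part of that theorem must fail; read in contrapositive form, this says that $X$ cannot fail to be a symmetry of $\cS$. Hence $X$ is a symmetry of $\cS$ in the sense of \Cref{def:symm}, i.e.\ it is everywhere tangent to $\cS$.

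With tangency in hand, $X$ restricts to a smooth tangent vector field on $\cS$, which is a closed surface since $\dim\Sigma = 3$. As $\chi(\cS)\neq 0$, the Poincaré–Hopf theorem forces this field to vanish at some point $p\in\cS$, so $X(p)=0$ and $\cS$ meets the zero set of $X$, as claimed. (Here $\chi(\cS)\neq 0$ means $\cS\cong S^2$, and this last step is precisely the classical hairy ball theorem.)

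The main obstacle is the first step, namely excluding the remaining possibility allowed by \Cref{thm:main}: that $\cS$ is marginally stable with $X$ nowhere tangent, hence uniformly transverse to $\cS$. One clean way to rule this out, valid in any dimension, is a volume argument: the flow of $X$ consists of isometries and so preserves volume, yet a uniformly transverse field on the connected boundary $\partial A=\cS$ pushes $A$ onto a proper subregion of strictly smaller volume, a contradiction. A more genuinely three-dimensional argument, which explains the role of the odd dimension, runs through Poincaré–Hopf on $A$ itself. A nontrivial Killing field on a $3$-manifold vanishes on a disjoint union of geodesics, so (since none of these meet $\cS$) its zero set inside $A$ is a disjoint union of circles, each of index zero: the normalized field maps the boundary torus of a tubular neighbourhood into a great circle of $S^2$ and hence has degree zero. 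Yet Poincaré–Hopf with $X$ transverse to $\partial A$ would demand total index $\pm\chi(A)=\pm\tfrac12\chi(\cS)\neq 0$. Either way the transverse case is impossible, which is exactly the content of \Cref{cor:boundary} invoked in the first step.
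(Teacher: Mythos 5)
Your proof is correct, but it takes a genuinely different---and in fact shorter and more general---route than the paper's. The paper argues in the contrapositive: assuming $X$ has no zeros on $\cS$, it invokes Kobayashi's theorem that the zero set of a Killing field is a union of totally geodesic submanifolds of even codimension, hence in three dimensions a union of closed geodesic circles in the interior of $A$; it then perturbs $X$ near these circles to get a nowhere-vanishing field on $A$ and applies the Poincar\'e--Hopf argument of \Cref{lem:PHopf} to the \emph{bulk region} $A$, concluding that $X$ is tangent to $\cS$ somewhere but not everywhere, whence $\cS$ is unstable by \Cref{thm:main}. You instead argue directly on the surface: stability together with \cref{cor1} of \Cref{cor:boundary} (already proved via the divergence-theorem \Cref{lem:div}) forces $X$ to be everywhere tangent to $\cS$, and then Poincar\'e--Hopf applied to the restriction $X|_{\cS}$ on a closed surface with $\chi(\cS) \neq 0$ produces a zero of $X$ on $\cS$ itself. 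This bypasses Kobayashi's theorem entirely in the main line, is direct rather than by contradiction, and never uses $\dim \Sigma = 3$ except through the possibility of $\chi(\cS) \neq 0$; your argument therefore proves the statement in any dimension (it is vacuous for even $n$, where $\chi(\cS)=0$ automatically). What the paper's route buys is the standalone \Cref{lem:PHopfKobayashi}, a statement about Killing fields with no zeros on $\cS$ that makes no reference to stability. Your second alternative argument for ruling out the transverse case (circles of zeros having index zero, so Poincar\'e--Hopf on $A$ fails) is essentially a repackaging of that lemma, while your volume-preservation argument is a clean alternative to \Cref{lem:div}; it needs $\cS$ connected, which the paper assumes implicitly throughout via the principal-eigenvalue theory.

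Two small repairs. First, the parenthetical claim that $\chi(\cS) \neq 0$ forces $\cS \cong S^2$ is false: orientable closed surfaces of genus $\gamma \geq 2$ have $\chi = 2 - 2\gamma \neq 0$, and without an energy condition a marginally stable MOTS of higher genus is not excluded (cf.\ \Cref{cor:genus}); your argument is unaffected, since Poincar\'e--Hopf, not the hairy ball theorem, is what you actually use. Second, in your index computation the normalized field near a zero circle need not map the boundary torus into a single great circle, since the normal plane rotates as one moves along the geodesic; the correct statement is that a nondegenerate (Morse--Bott) circle of zeros, with linearization rotating the normal planes, has index $\chi(S^1) = 0$, which is exactly what the paper's perturbation argument establishes.
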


For instance, if $(\Sigma,g,K)$ satisfies the dominant energy condition, then any MOTS that bounds a compact region and is strictly stable (and hence has spherical topology by \cite[Lem.~9.2]{AMS2007}) must intersect the zero set of $X$. The same is true of the apparent horizon, provided it is a compact surface. See \Cref{sec:translation} for further discussion.

Another easy consequence of \Cref{thm:main} is that in three spatial dimensions, a MOTS of genus greater than one cannot be strictly stable when $(\Sigma,g,K)$ has a symmetry.

\begin{cor}
\label{cor:genus}
Let $X$ be a symmetry of a three-dimensional initial data set $(\Sigma,g,K)$. If $\cS$ is a MOTS of genus $\gamma \geq 2$, then it is either unstable or marginally stable.
\end{cor}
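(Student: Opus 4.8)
The plan is to reduce everything to \Cref{thm:main}. That result already guarantees that whenever $X$ is a symmetry of $(\Sigma,g,K)$ but \emph{not} a symmetry of $\cS$, the MOTS is marginally stable or unstable, i.e.\ $\prin \le 0$. So it is enough to show that a nontrivial symmetry $X$ of a three-dimensional initial data set can never be a symmetry of a genus-$\gamma$ MOTS with $\gamma \ge 2$; equivalently, that $X$ cannot be everywhere tangent to such an $\cS$. I would argue this by contradiction, assuming $X$ is everywhere tangent to $\cS$.

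The first step is to observe that tangency promotes $X$ to a Killing field of the induced metric. Since $X$ is tangent to the compact surface $\cS$, its flow preserves $\cS$; and because $\LX g = 0$ this flow acts by isometries of $(\Sigma,g)$. Restricting to $\cS$ therefore gives a one-parameter group of isometries of $(\cS,q)$, whose infinitesimal generator is $X|_\cS$. Thus $X|_\cS$ is a Killing field of $(\cS,q)$.

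The second step is the topological obstruction. A closed orientable surface of genus $\gamma \ge 2$ has Euler characteristic $\chi(\cS)=2-2\gamma<0$, and such a surface carries no nontrivial Killing field. This is a short Poincar\'e--Hopf argument: a nowhere-vanishing vector field forces $\chi(\cS)=0$, whereas a nontrivial Killing field on a surface has only isolated zeros, at each of which the linearization is a nonzero infinitesimal rotation and hence the index is $+1$, forcing $\chi(\cS)>0$. Both possibilities contradict $\chi(\cS)<0$, so $X|_\cS\equiv 0$; that is, $X$ vanishes at every point of $\cS$.

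The last step, which I expect to be the only genuinely delicate point, is to rule out this degenerate conclusion, since a vector field that vanishes on $\cS$ is vacuously tangent to it and so would not by itself contradict the assumption. Here I would use that $X$ is a nontrivial Killing field of the \emph{three}-dimensional manifold $\Sigma$: the zero set of such a field is a union of totally geodesic submanifolds of even codimension, and since a nontrivial Killing field cannot vanish on an open set this zero set has dimension at most one. As $\cS$ is immersed with everywhere-injective differential, its image cannot be contained in a set of dimension $\le 1$. This contradicts $X\equiv 0$ along $\cS$, so $X$ is not everywhere tangent to $\cS$, and \Cref{thm:main} then shows $\cS$ is unstable or marginally stable.
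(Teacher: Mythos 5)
Your proposal is correct and takes essentially the same route as the paper's proof: reduce via \Cref{thm:main} to showing $X$ cannot be everywhere tangent to $\cS$, observe that tangency would make $X|_\cS$ a Killing field of $(\cS,q)$, rule that out for genus $\gamma \geq 2$ by the Poincar\'e--Hopf index argument, and then exclude $X$ vanishing along $\cS$ using Kobayashi's theorem that the zero set of a Killing field is a union of totally geodesic submanifolds of even codimension. The only difference is cosmetic: where you note that an immersed surface cannot lie in the at-most-one-dimensional zero set, the paper concludes directly that $X$ would have to vanish identically, contradicting \Cref{def:symm}.
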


\begin{rem}
Unlike \Cref{cor:boundary}, this result does not assume that $X$ is nowhere vanishing or that $\cS$ bounds a region in $\Sigma$. It is not possible to rule out the marginally stable case under these weaker hypotheses, as demonstrated by the example in \eqref{ex:marginal}. On the other hand, if the dominant energy condition holds, a MOTS of positive genus cannot be strictly stable, and a MOTS of genus $\gamma \geq 2$ must be unstable \cite{Hawking:1973}. While the existence of a symmetry is a strong assumption on the initial data set, it is interesting to note that no energy conditions are needed in \Cref{cor:genus}.
\end{rem}

From \Cref{thm:main}, we know 
that $0$ is an eigenvalue of $L_{\cS}$ with associated eigenfunction $\ef = g(X,\nor)$, where $\nor$ is the unit normal vector field to $\cS$ in $\Sigma$. When $L_{\cS}$ is self-adjoint, or at least is similar to a self-adjoint operator, we can obtain additional information about its spectrum from the structure of $\ef$.
In \Cref{lem:Scale}, we give a sufficient condition for $L_\cS$ to be similar to a self-adjoint operator, in terms of the so-called H\'aji\v{c}ek one-form $\omega$, defined in \eqref{omega}.

Defining the \textit{nodal domains} of a function $\psi$ to be the connected components of $\{p \in \cS : \psi(p) \neq 0\}$, and letting $\nu(\psi)$ denote the number of nodal domains, we then have the following consequence of Courant's nodal domain theorem.

\begin{theorem}
\label{cor:nodal}
If the hypotheses of \Cref{thm:main} are satisfied and the H\'aji\v{c}ek one-form $\omega$ is exact, then $L_{\cS}$ has at least $\nu(\ef) - 1$ negative eigenvalues.
\end{theorem}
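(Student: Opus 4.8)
The plan is to reduce to a self-adjoint operator and then apply a sharp form of Courant's nodal domain theorem. Since $\omega$ is exact by hypothesis, \Cref{lem:Scale} provides a strictly positive function $h$ on $\cS$ such that $\tilde L := h^{-1} L_\cS\, h$ is self-adjoint (of Schrödinger type, with Dirichlet form $a(u,u) = \int_\cS (|\nabla u|^2 + V u^2)$ for a suitable potential $V$). As conjugation by $h$ is a similarity, $L_\cS$ and $\tilde L$ have the same spectrum, and by \Cref{thm:main} the value $0$ is an eigenvalue of $\tilde L$ with eigenfunction $\tilde\psi = h^{-1}\ef$. Crucially, because $h > 0$ everywhere, $\tilde\psi$ and $\ef$ share the same zero set and hence the same nodal domains, so $\nu(\tilde\psi) = \nu(\ef)$. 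It therefore suffices to show that the self-adjoint operator $\tilde L$ has at least $\nu(\tilde\psi) - 1$ negative eigenvalues.

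Write $\nu = \nu(\tilde\psi)$ with nodal domains $D_1,\dots,D_\nu$. I would first introduce the test functions $v_i = \tilde\psi \cdot \mathbf 1_{D_i}$, extended by zero, which lie in $H^1(\cS)$ since $\tilde\psi$ vanishes on $\partial D_i$. They are linearly independent with pairwise disjoint supports, hence mutually $L^2$-orthogonal, and integration by parts over each $D_i$ — with no boundary term, as $\tilde\psi|_{\partial D_i} = 0$ — gives $a(v_i,v_i) = 0$ because $\tilde L \tilde\psi = 0$ on $D_i$. Consequently the $\nu$-dimensional space $W = \operatorname{span}(v_1,\dots,v_\nu)$ satisfies $a(w,w) = 0$ for every $w \in W$.

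The key step, and the main obstacle, is upgrading the count of \emph{nonpositive} eigenvalues that this yields to one of \emph{strictly negative} eigenvalues; the naive min-max argument on $W$ only produces $\lambda_\nu \le 0$. Let $N$ be the number of negative eigenvalues of $\tilde L$, with corresponding eigenfunctions $\phi_1,\dots,\phi_N$, and suppose for contradiction that $N \le \nu - 2$. Then $W_0 = W \cap \{\phi_1,\dots,\phi_N\}^\perp$ has dimension at least $\nu - N \ge 2$. For $w \in W_0$ the variational principle gives $a(w,w) \ge \lambda_{N+1}\|w\|^2 \ge 0$, while $w \in W$ forces $a(w,w) = 0$; hence $\lambda_{N+1} = 0$ and every $w \in W_0$ is a genuine smooth $0$-eigenfunction of $\tilde L$. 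Since $\dim W_0 \ge 2$, imposing the single linear condition that the $v_1$-coefficient vanish still leaves a nonzero $w \in W_0$, which then vanishes on the open set $D_1$. But $\tilde L$ is a second-order elliptic operator with smooth coefficients, so by unique continuation a nontrivial eigenfunction cannot vanish on an open set — a contradiction. Hence $N \ge \nu - 1$, and transporting this back through the similarity shows that $L_\cS$ has at least $\nu(\ef) - 1$ negative eigenvalues. The point requiring genuine care is the unique continuation input, which is precisely what makes the bound strict rather than merely $\lambda_\nu \le 0$.
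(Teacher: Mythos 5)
Your argument is correct, and its skeleton coincides with the paper's: use \Cref{lem:Scale} to conjugate $L_\cS$ by the positive function $e^f$ into the self-adjoint Schr\"odinger-type operator $\tilde L_\cS$, observe that the similarity preserves both the spectrum and the nodal domains of the zero-eigenfunction $\ef$ supplied by \Cref{thm:main}, and then count negative eigenvalues via a Courant-type nodal bound. The one genuine difference is the last step: the paper simply cites Courant's nodal domain theorem in its strong form (an eigenfunction for $\lambda$ has at most $\min\{k : \lambda_k = \lambda\}$ nodal domains, the ``minimal label,'' which matters here since $0$ may be a degenerate eigenvalue), whereas you re-derive that bound from scratch. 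Your derivation correctly isolates the actual subtlety of Courant's theorem\,---\,min--max on the span of the nodal restrictions $v_i = \tilde\psi\,\mathbf{1}_{D_i}$ only gives $\lambda_\nu \leq 0$, and upgrading to \emph{strictly} negative eigenvalues requires showing that a putative zero-eigenfunction in your space $W_0$ vanishing on the open set $D_1$ must vanish identically, by Aronszajn-type unique continuation for second-order elliptic operators. What your route buys is self-containedness (the theorem is proved rather than quoted, including the degenerate case); what the paper's route buys is brevity. One technical polish for your write-up: rather than integrating by parts over each $D_i$, whose boundary may be irregular, it is cleaner to note that $\nabla v_i = (\nabla\tilde\psi)\,\mathbf{1}_{D_i}$ almost everywhere, so that
\begin{equation*}
a(v_i,v_i) = a(v_i,\tilde\psi) = 0
\end{equation*}
directly from the fact that $\tilde\psi$ is a (weak, hence smooth) solution of $\tilde L_\cS \tilde\psi = 0$ tested against $v_i \in H^1(\cS)$; this sidesteps any discussion of boundary terms on $\partial D_i$.
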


In \Cref{sec:Sch}, we will use this to obtain a lower bound on the number of unstable eigenvalues for the self-intersecting MOTSs in Schwarzschild that were recently observed in \cite{Booth:2020qhb}.

\subsection*{Outline of the paper}
In \Cref{sec:stability}, we review the definition of the stability operator. In \Cref{sec:app}, we give applications to initial data sets with rotational or translation symmetry. The proofs of the main theorems are postponed until \Cref{sec:proofs}. In \Cref{sec:marginal} we discuss the marginally stable case of \Cref{thm:main}.

\section{The MOTS stability operator}
\label{sec:stability}

As noted in the introduction, the stability operator measures the variation of $\theta_+$ when 
$\cS$ is deformed along the flow generated by a vector field $Y$ on $\Sigma$. Decomposing $Y$ on $\cS$ into normal and tangential parts, $Y = \psi \nor + \tilde{Y}$,
it is straightforward to see that this variation depends only on the normal component:
\begin{align}
    \delta_{Y} \theta_+ 
    = \delta_{\psi \nor} \theta_+ + \delta_{\tilde{Y}} \theta_+ 
    = \delta_{\psi \nor} \theta_+ \; . 
\end{align}
The $\delta_{\tilde{Y}} \theta_+$ part of the deformation amounts to a shift over $\cS$ and so vanishes for a MOTS. 

The variation only depends on the values of $\psi$ on $\cS$ and is given by the action of an elliptic operator on $\psi$. This is a more involved calculation but one that has appeared in many different versions over the last few decades (see, for example, \cite{Newman_1987,Hayward:1993wb,Booth:2005ng,AMS2007,Cao:2010vj}). If $\cS$ is 
a MOTS, then the result is
\[
    \delta_{Y} \theta_+ =  L_{\cS} \psi
\]
for $\psi = g(Y,\nor)$ and 
\begin{equation}
    {L}_{\cS} {\psi} = \left[ -(\mathcal{D}^A - {\omega}^A)(\mathcal{D}_A - {\omega}_A)  + \tfrac{1}{2} \left(  R^{(2)} -\| \sigma_+ \|^2 -  {G}_{++} -  {G}_{+-} \right) \right] {\psi} \label{LS} \; . 
\end{equation}
In this equation, $\mathcal{D}_A$ is the Levi-Civita connection associated with the induced two-metric $q_{AB}$,
\begin{equation}
\label{omega}
    {\omega}_A  =  e_A^a K_{ab} \nor^b = - \tfrac{1}{2} e_A^\alpha {\ell}^-_\beta \nabla_\alpha {\ell}_+^\beta 
\end{equation}
is the connection on the normal bundle (also known as the H\'aji\v{c}ek one-form, which is associated with 
angular momentum \cite{Hajicek:1974oua}), $R^{(2)}$ is the Ricci scalar of the two-metric $q_{AB}$, 
$\sigma_+ = k_+ - \frac{1}{2} \Tr(k_+) q$ is the shear of $\ell_+$ on $\cS$ 
and
\begin{equation}
G_{++} = {G} ({\ell}^+, {\ell}^+)  \mbox{  and  } G_{+-} = {G}( {\ell}^+, {\ell}^-) 
\end{equation}
are components of the Einstein tensor ${G}$ of $\bar{g}$ over $\mathcal{M}$. Alternatively, via the Gauss--Codazzi equations, these terms can be written intrinsically to $\Sigma$ as
\begin{equation}
G_{++} + G_{+-} = \frac{1}{2} \left( R^{(3)} + H^2 - K^{ab} K_{ab} \right) + \left(D_b K_a^{\phantom{a} b} - D_a H \right) \hat{N}^a \;  . 
\end{equation}
The operator $L_\cS$ takes a slightly different form if the null vectors
are scaled differently than in \eqref{eq:l=u+N}; however, such changes (discussed as ``MOTS-gauge symmetries'' in \cite{Jaramillo:2015twa}) leave the eigenvalues unchanged. 

If $\omega = 0$, then $L_{\cS}$ is self-adjoint, in which case all eigenvalues are real. 
If $\omega \neq 0$ then this is not the case in general. For instance, the stability operator for the Kerr black hole has complex eigenvalues provided the angular momentum is nonzero \cite{BCK_Kerr}. There is, however, a special case in which $L_\cS$ is \emph{similar} to a self-adjoint operator and hence has real eigenvalues.

\begin{lemma}
Given the operator $L_{\cS}$ defined by \eqref{LS}, if $\omega = \mathcal{D} f$ for some function $f$, then $L_{\cS}$ is similar to the self-adjoint operator
\begin{equation}
    {\tilde{L}}_{\cS}  =  -\mathcal{D}^A \mathcal{D}_A + \tfrac{1}{2}\left(  R^{(2)} -  \| \sigma_+ \|^2 -  {G}_{++} - {G}_{+-} \right) \label{LSt} 
\end{equation}
in the sense that $L_\cS \psi = e^f \tilde{L}_\cS (e^{-f}\psi)$ for all $\psi$. In particular, $L_\cS$ and $\tilde L_\cS$ have the same eigenvalues, and their eigenfunctions are 
related by $\psi = e^{ f} \tilde{\psi}$.
\label{lem:Scale}
\end{lemma}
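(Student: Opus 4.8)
The plan is to recognize the hypothesis $\omega = \mathcal{D}f$ as the statement that the connection appearing in $L_{\cS}$ is pure gauge, so the gauged derivative $\mathcal{D}_A - \omega_A$ is conjugate to the plain Levi-Civita derivative by multiplication by $e^{\pm f}$. The whole lemma then reduces to a conjugation identity together with the observation that the zeroth-order potential is inert under conjugation. First I would establish the first-order identity
\[
(\mathcal{D}_A - \omega_A)\psi = e^f\,\mathcal{D}_A\!\left(e^{-f}\psi\right),
\]
which follows immediately from the product rule once $\omega_A = \mathcal{D}_A f$ is substituted, since $\mathcal{D}_A(e^{-f}\psi) = e^{-f}\bigl(\mathcal{D}_A\psi - (\mathcal{D}_A f)\psi\bigr)$.

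Next I would iterate this to handle the second-order term. Writing $\eta_A = (\mathcal{D}_A - \omega_A)\psi$ and $\chi = e^{-f}\psi$, the first step gives $\eta_A = e^f\,\mathcal{D}_A\chi$. The only subtlety is that the outer operator $\mathcal{D}^A - \omega^A$ now acts on a one-form rather than a scalar; here I would observe that the conjugation identity persists verbatim, $(\mathcal{D}_B - \omega_B)\eta_A = e^f\,\mathcal{D}_B(e^{-f}\eta_A)$, because the extra Christoffel terms contained in $\mathcal{D}_B\eta_A$ are untouched by the conjugation and only the multiplicative $\omega_B$ piece is affected. Contracting with $q^{AB}$ and using $e^{-f}\eta_A = \mathcal{D}_A\chi$ then yields
\[
(\mathcal{D}^A - \omega^A)(\mathcal{D}_A - \omega_A)\psi = e^f\,\mathcal{D}^A\mathcal{D}_A\!\left(e^{-f}\psi\right).
\]
Because the potential $V = \tfrac{1}{2}\bigl(R^{(2)} - \|\sigma_+\|^2 - G_{++} - G_{+-}\bigr)$ acts by multiplication, it trivially satisfies $V\psi = e^f V(e^{-f}\psi)$, and adding the two pieces gives precisely $L_{\cS}\psi = e^f\,\tilde{L}_{\cS}(e^{-f}\psi)$.

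Finally I would record the two immediate consequences. Self-adjointness of $\tilde{L}_{\cS}$ holds because, on the closed manifold $\cS$, the Laplace--Beltrami operator $-\mathcal{D}^A\mathcal{D}_A$ is self-adjoint in $L^2(\cS)$ with respect to the area measure induced by $q$, and $V$ is a real multiplication operator, hence also self-adjoint. The similarity relation then transfers the spectrum: if $\tilde{L}_{\cS}\tilde\psi = \lambda\tilde\psi$, then $\psi = e^f\tilde\psi$ satisfies $L_{\cS}\psi = e^f\,\tilde{L}_{\cS}(e^{-f}\psi) = \lambda\psi$, and the correspondence is reversible, so the two operators have the same eigenvalues and their eigenfunctions are related by $\psi = e^f\tilde\psi$. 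I do not anticipate a genuine obstacle: the computation is elementary, and the only point requiring care is confirming that the gauge-conjugation identity continues to hold when $\mathcal{D}_A - \omega_A$ is applied to the intermediate one-form $\eta_A$ rather than to a scalar.
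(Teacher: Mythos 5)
Your proposal is correct and follows essentially the same route as the paper: the paper's proof is the single conjugation identity $(\mathcal{D}-\omega)(e^f T) = e^f \mathcal{D} T$ for an arbitrary tensor field $T$, applied twice, which is exactly your gauge-conjugation computation (you merely spell out separately the scalar and one-form cases and verify the spectral consequences that the paper treats as immediate). No gaps; the point you flag about the identity persisting on the intermediate one-form is handled in the paper by stating the identity for all tensors at once.
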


For deformations in the null direction $\ell_-$, this was shown
in \cite{Jaramillo:2015twa,J15}. For the spacelike deformations considered here, the 
proof is essentially the same, but for completeness, we present it below. 

\begin{proof}[Proof of \Cref{lem:Scale}]
If $\omega = \mathcal{D} f$ and $T$ is any tensor field, then 
\begin{align}
    (\mathcal{D}-\omega) (e^f T) = e^f (T \mathcal{D}f + \mathcal{D} T - \omega T) = e^f \mathcal{D} T \; ,
\end{align}
and so $L_{\cS} (e^f \tilde{\psi}) = e^f \tilde{L}_{\cS} \tilde{\psi}$.
\end{proof}


The stability operator plays a key role in understanding the evolution of 
MOTSs for dynamical black holes. 
In \cite{AMS2005,AMS2007} it was shown that a strictly stable MOTS can be locally evolved into the past and future as a marginally outer trapped tube (MOTT). The same is true of any MOTS for which $\LS$ has no vanishing eigenvalues, with strictly stable MOTSs being a special case (since $\lambda_1>0$ implies all eigenvalues have positive real part and hence are nonzero).

If $\LS$ has a vanishing eigenvalue, then evolution into the past or future is not guaranteed. However, this is not a surprise. Going back as far as \cite{Hawking:1973uf}, it has been recognized that apparent horizons can ``jump'' and, when they do, they immediately split into a pair of MOTSs, one expanding in area and the other shrinking. These are the ``pair creations" mentioned in the introduction. Pairs can also annihilate.  These behaviours have been much discussed over the years (see, for example, \cite{Booth:2005ng, AMS2005, AMS2007,Schnetter:2006yt,Jaramillo:2011zw}), seen
in binary mergers (for example \cite{Thornburg:2006zb,Schnetter:2006yt,Chu:2010yu,Gupta:2018znn,Pook-Kolb:2021jpd}) and in exact solutions (for example \cite{Ben-Dov:2004lmn,Booth:2005ng,Jakobsson:2012cf}). In all observed cases, the MOTS that appears (disappears) in a creation (annihilation) event has a vanishing eigenvalue of $\LS$
(though not necessarily $\prin$). Returning to the deformation interpretation of the stability operator, as given by \eqref{stabdef}, these are cases where the marginally outer trapped tube containing the MOTS becomes tangent to $\Sigma$. 

An important result in this direction is \cite[Prop~5.1]{AMMS2008}, which says that, under a suitable genericity assumption, every marginally stable MOTS is contained in a smooth MOTT that is tangent to $\Sigma$. The case where $\prin < 0$ but a higher eigenvalue of $L_\cS$ vanishes is more delicate and, to the best of our knowledge, has not been studied analytically. However, there have been numerical studies which examined the evolution of the eigenvalues during a black hole merger \cite{Pook-Kolb:2020jlr,Pook-Kolb:2021jpd}. They suggest that vanishing higher eigenvalues are
associated with moments of non-unique evolution through creation/annihilation events.

\section{Applications}
\label{sec:app}

\subsection{Spherical symmetry}
\label{sec:spherical}

We first prove a general result for initial data sets with spherical symmetry.

\begin{theorem}
\label{thm:sphere}
If $\Sigma \subset \bbR^3$ and $(\Sigma,g,K)$ is a spherically symmetric initial data set, then any non-spherical MOTS $\cS$ in $\Sigma$ is unstable.
\end{theorem}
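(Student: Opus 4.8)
The plan is to apply \Cref{thm:main} with a judiciously chosen rotational symmetry $X$. Since $(\Sigma, g, K)$ is spherically symmetric, its symmetry algebra contains the rotation generators of $SO(3)$; in coordinates these are the vector fields $X_1, X_2, X_3$ generating rotations about the three coordinate axes. Each $X_i$ satisfies $\LX[X_i] g = \LX[X_i] K = 0$, so each is a symmetry of the initial data set in the sense of \Cref{def:symm}. The strategy is to show that \emph{at least one} of these rotational symmetries fails to be tangent to $\cS$ everywhere (so that it is not a symmetry of $\cS$) but \emph{is} tangent to $\cS$ at some point, so that case~(2) of \Cref{thm:main} forces $\cS$ to be unstable.

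First I would make precise what ``non-spherical'' means: $\cS$ is not invariant under the full $SO(3)$ action, equivalently it is not a union of orbit spheres $\{r = r_0\}$. The key geometric fact I would establish is that \emph{some} rotation generator $X_i$ is tangent to $\cS$ at a point. This should follow from a topological/dimension-count argument: $\cS$ is a closed orientable surface in $\Sigma \subset \bbR^3$, and at any point $p \in \cS$ the three vectors $X_1(p), X_2(p), X_3(p)$ span the tangent space to the orbit sphere through $p$. If $p$ is a point where $\cS$ is tangent to the orbit sphere (which must occur, e.g., at a point of $\cS$ maximizing the radial coordinate $r$, where the outward normal $\nor$ is radial), then all of $X_1, X_2, X_3$ lie in $T_p\cS$ there. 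Hence at such a point every rotation generator is tangent to $\cS$, giving $g(X_i, \nor)(p) = 0$ for the chosen $X_i$; this is the ``tangent at some point'' hypothesis of case~(2).

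Next I would show that this same $X_i$ is \emph{not} a symmetry of $\cS$, i.e. not everywhere tangent. This is where non-sphericity is used: if every rotation generator were everywhere tangent to $\cS$, then $\cS$ would be invariant under the connected group $SO(3)$, forcing it to be a union of orbit spheres and hence spherical, contradicting the hypothesis. So at least one $X_i$ is non-tangent somewhere, meaning $\ef = g(X_i, \nor) \not\equiv 0$; thus $X_i$ is a symmetry of $(\Sigma,g,K)$ but not of $\cS$. The one subtlety is ensuring the \emph{same} index $i$ simultaneously witnesses non-tangency somewhere and tangency at the extremal point. This I would resolve by noting that the extremal-radius point gives tangency of \emph{all three} generators at once, while non-sphericity gives non-tangency of \emph{at least one} generator somewhere; since tangency at the extremal point holds for every $i$, I simply select the index that fails to be everywhere tangent, and that single $X_i$ satisfies both conditions.

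The main obstacle I anticipate is the bookkeeping around immersed (possibly self-intersecting) surfaces and the precise argument that a non-spherical $\cS$ admits a rotation generator tangent at some point. The cleanest route is to exploit the radial coordinate: let $r$ denote the Euclidean (or geodesic) radial function on $\Sigma$, and consider a point $p \in \cS$ where $r|_\cS$ attains its maximum. At such a critical point the gradient of $r$ is normal to $\cS$, i.e. $\nor$ is parallel to $\p/\p r$, so $\nor$ is orthogonal to the orbit sphere and hence to each rotational generator $X_i$, yielding $g(X_i,\nor)(p)=0$ for all $i$. With tangency secured at $p$ and non-tangency somewhere guaranteed by non-sphericity, \Cref{thm:main}(2) immediately gives instability.
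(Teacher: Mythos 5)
Your proposal is correct and follows essentially the same route as the paper's proof: both locate a point maximizing the radial function on $\cS$, where the normal is radial and hence every rotation generator is tangent, and both use non-sphericity to produce a generator that is not everywhere tangent, then invoke case~(2) of \Cref{thm:main}. The only cosmetic difference is that the paper takes $X$ to be a linear combination of the rotation generators while you select a single coordinate generator $X_i$; your resolution of the ``same index'' subtlety (tangency at the radial maximum holds for \emph{all} generators simultaneously) is exactly the observation implicit in the paper's argument.
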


By ``spherical MOTS" we mean a hypersurface that is invariant under the spherical symmetry of $(\Sigma,g,K)$, therefore a non-spherical MOTS could still be diffeomorphic to a sphere. Note that we do not require $\Sigma$ to be all of $\bbR^3$, so we could have $\Sigma = \bbR^3 \backslash \{0\}$, for instance. In particular, we conclude that \emph{any non-spherical MOTS in Schwarzschild must be unstable}. This demonstrates that all of the exotic MOTS found in \cite{Booth:2020qhb,Hennigar:2021ogw} are unstable. We will recall these solutions and investigate their instability further in \Cref{sec:Sch}.

\begin{rem}
If $\cS$ is spherical, then \Cref{thm:main} does not apply and anything is possible. For instance, there are spherical MOTSs in Reissner--Nordstr\"om that are strictly stable, marginally stable and unstable for appropriate choices of mass and charge \cite{BKO17,Hennigar:2021ogw,Booth:2021sow}.
\end{rem}

\begin{rem}
There is a line of research that seeks to identify a geometrically
preferred MOTT whose invariant properties would then be of 
physical significance. Several ideas have been proposed for the properties that such a surface should satisfy (see, for example, 
\cite{Hayward:1993wb, Bengtsson:2010tj, Coley:2017vxb}) with
general agreement that in spherically symmetric spacetimes, the preferred MOTT should also be spherically symmetric. The above result
might then be viewed as suggesting another potential path to 
this goal: the spherically symmetric MOTS is distinguished as the
only strictly stable one in a spherically symmetric slice. However, it needs 
to be kept in mind that there are an infinite number of
non-symmetric slices that also 
harbour strictly stable MOTSs (for example, any non-symmetric perturbation of the spherically symmetric slice). Hence, this is probably not a sufficient criteria. 
\end{rem}

\begin{proof}[Proof of \Cref{thm:sphere}]
Consider the vector fields
$R_x$, $R_y$ and $R_z$ 
that generate rotations about the $x$, $y$ and $z$ axes, respectively. Since $\cS$ is not spherically symmetric, there exists a linear combination of these, which we call $X$, that is not everywhere tangent to $\cS$. Let us prove that $X$, which is a symmetry of $(\Sigma,g,K)$, is somewhere tangent to $\cS$.

Consider the restriction $f := r\big|_{\cS}$ of the radial coordinate to $\cS$. Since $\cS$ is compact, $f$ has a maximum at some point $p \in \cS$, thus $\nabla^{\cS} f(p) = 0$. Decomposing the gradient of $r$ into normal and tangential components along $\cS$,
\[
	\nabla r = \nabla^{\cS} f + (\nabla_{\nor} r)\nor,
\]
shows that $\nor_p$ is proportional to $\nabla r(p)$. Since $R_x$, $R_y$, and $R_z$ are orthogonal to $\nabla r(p)$, they are also orthogonal to $\nor_p$ and hence are contained in $T_p S$. This means $X_p \in T_p \cS$, so we can apply \Cref{thm:main} to conclude that $\cS$ is unstable.
\end{proof}

If we assume that $\cS$ is axisymmetric and the one-form $\omega$ is exact, we get a stronger result. Letting $(r, \theta, \phi)$ denote the standard spherical coordinates for $\mathbb{R}^3$, an axisymmetric surface can be parameterized as
\begin{equation}
\label{PTheta}
    (r,\theta,\phi) = \big(P(s), \Theta(s),\phi\big)
\end{equation}
for some functions $P$ and $\Theta$.

\begin{theorem}\label{th:revolution}
Let $(\Sigma,g,K)$ be a spherically symmetric initial data set and $\cS$ an axisymmetric MOTS whose generating curve is parameterized by $\big(P(s),\Theta(s)\big)_{s \in [a,b]}$. If $\omega$ is exact and $P$ is not constant, then $L_{\cS}$ has at least $2 C(P) + 1$ negative eigenvalues, where $C(P)$ is the number of critical points of $P$ in $(a,b)$.
\end{theorem}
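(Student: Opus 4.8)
The plan is to exhibit an explicit $0$-eigenfunction of $L_\cS$ with many nodal domains and then invoke \Cref{cor:nodal}. Since the data is spherically symmetric I may rotate coordinates so that the symmetry axis of $\cS$ is the $z$-axis; then $R_z = \partial_\phi$ is tangent to $\cS$, while I take $X = R_y$, a rotation about an axis orthogonal to the symmetry axis. This $X$ is a symmetry of $(\Sigma,g,K)$, and the candidate eigenfunction is $\ef = g(X,\nor)$.

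First I would compute $\ef$ in the $(s,\phi)$ parameterization. Writing $R_y = \cos\phi\,\partial_\theta - \cot\theta\sin\phi\,\partial_\phi$ in spherical coordinates and using that a spherically symmetric metric takes the block form $g = A(r)\,dr^2 + B(r)\big(d\theta^2 + \sin^2\theta\,d\phi^2\big)$ with $A,B>0$, the unit normal $\nor$ lies in the span of $\partial_r,\partial_\theta$ and is orthogonal to $\partial_\phi$; hence $\ef = \cos\phi\, g(\partial_\theta,\nor)$. Computing the meridian-plane normal orthogonal to the tangent $P'\partial_r + \Theta'\partial_\theta$ then gives $g(\partial_\theta,\nor) = -c(s)\,P'(s)$ for a strictly positive function $c(s)$, so
\[
    \ef(s,\phi) = -c(s)\,P'(s)\cos\phi .
\]
Because $P$ is non-constant, $\ef\not\equiv 0$, so $X$ is not a symmetry of $\cS$ and \Cref{thm:main} applies: $\ef$ is an eigenfunction of $L_\cS$ with eigenvalue $0$. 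Together with the hypothesis that $\omega$ is exact, this means the hypotheses of \Cref{cor:nodal} are satisfied.

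Next I would count the nodal domains of $\ef$. Its zero set is the union of the $C(P)$ latitude circles $\{P'(s)=0\}$ and the great circle $\{\cos\phi = 0\}$, i.e. the two meridians $\phi = \pi/2,\,3\pi/2$ joined through the two poles. Regarding $\cS$ as a topological sphere whose generating arc runs between the poles $s=a$ and $s=b$, the great circle separates $\cS$ into two disks, and within each disk the $C(P)$ disjoint nested latitude arcs cut it into $C(P)+1$ bands; hence $\nu(\ef) = 2\big(C(P)+1\big) = 2C(P)+2$. \Cref{cor:nodal} then yields at least $\nu(\ef)-1 = 2C(P)+1$ negative eigenvalues, as claimed.

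I expect the main obstacle to be the two geometric verifications underlying the count: first, showing cleanly that $g(\partial_\theta,\nor)$ is a strictly negative multiple of $P'(s)$, so that its zeros are exactly the critical points of $P$ — this requires care with the normalization of $\nor$ and with the fact that the round-sphere orbits force the metric to be block-diagonal; and second, the topological bookkeeping of the nodal set near the poles, where $\phi$ degenerates and one must check both that $\ef$ vanishes there (so the poles lie on the polar great circle) and that the resulting regions are exactly the $2(C(P)+1)$ bands. A minor point worth confirming is that the argument is insensitive to whether a given critical point of $P$ is a genuine sign change of $P'$, since $\ef$ vanishes on the entire latitude circle $\{P'(s)=0\}$ regardless, so that circle always separates distinct nodal domains.
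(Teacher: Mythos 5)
Your proposal is correct and takes essentially the same route as the paper's proof: the same choice of symmetry $X$ (a rotation about an axis orthogonal to the axisymmetry axis), the same computation showing $\ef = g(X,\nor)$ is a positive multiple of $-P'(s)\cos\phi$, the same nodal-domain count $\nu(\ef) = 2\big(C(P)+1\big)$, and the same appeal to \Cref{thm:main} and \Cref{cor:nodal}. Your added care at the poles and at critical points where $P'$ does not change sign merely makes explicit some bookkeeping the paper leaves implicit.
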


\begin{rem}
It is implicit in the conclusion of the theorem that $C(P)$ is finite whenever $P$ is not constant. Of course, this is not true for an arbitrary smooth function, but $P$ here is highly constrained since $\big(P(s),\Theta(s) \big)$ parameterizes a MOTS.
\end{rem}

\begin{proof}
The tangent space to $\cS$ at a point $p = \big(P(s), \Theta(s), \phi \big)$ is spanned by
\[
	\left\{ P'(s) \vectorbasis{r} + \Theta'(s) \vectorbasis{\theta},\  \vectorbasis{\phi} \right\}.
\]
Writing $g = dr^2 + A(r)^2 d\Omega^2$, we calculate the outward unit normal
\begin{equation}
\label{Np}
	\nor_p = a(s) \left[ A\big(P(s)\big)^2 \Theta'(s) \vectorbasis{r} - P'(s) \vectorbasis{\theta}\right],
\end{equation}
where $a(s) > 0$ is a normalization constant. Using the change of coordinates
\[\begin{array}{l}
x=r\sin \theta \cos\phi\\
        y=r\sin \theta \sin\phi\\
        z=r\cos \theta
\end{array}\qquad\longrightarrow\qquad
\begin{array}{l}\displaystyle
\vectorbasis{x}=\sin \theta \cos\phi\vectorbasis{r}+\frac{\cos\phi \cos \theta}{r}\vectorbasis{\theta}-\frac{\sin\phi}{r\sin\theta}\vectorbasis{\phi}\\[2ex]
\displaystyle
\vectorbasis{z}=\cos \theta\vectorbasis{r}-\frac{\sin\theta}{r}\vectorbasis{\theta}
\end{array}\]
the rotation in the $x$-$z$ plane can be rewritten as
\begin{align}
	X := z \vectorbasis{x} - x \vectorbasis{z} =\cos\phi\vectorbasis{\theta}-\sin\theta\cot\phi\vectorbasis{\phi}\,.
\end{align}
Using this and \eqref{Np}, we compute
\begin{equation}
	\ef = g(X,\nor) = g\left( \cos\phi \vectorbasis{\theta}, -P'(s) \vectorbasis{\theta}\right) = -a(s) A\big(P(s)\big)^2  P'(s) \cos\phi.
\end{equation}
As expected from \Cref{thm:main}, we see that $\ef$ is not identically zero as long as $P(s)$ is not constant, i.e., the surface it parameterizes is not a sphere. In this case, $\ef$ is an eigenfunction of $L_{\cS}$. Up to a non-vanishing factor, it is proportional to $P'(s) \cos\phi$.

Since $\cos\phi$ changes sign on $\cS$ (recall that $0 < \phi < 2\pi$), we conclude 
that $\psi$ has twice as many nodal domains as  $P'(s)$. Letting $C(P)$ denote the number of critical points of $P$, we thus have $\nu(\psi) = 2 \big( C(P) + 1 \big) = 2 C(P) + 2$, 
and so by \Cref{cor:nodal}, $L_{\cS}$ has at least $2 C(P) + 1$ negative eigenvalues.
\end{proof}

\begin{rem}
In the proof we also could have chosen the symmetry $X$ to be rotation in the $y$-$z$ plane. This gives a second linearly independent function in $\ker L_\cS$, namely $P'(s) \sin\phi$.    
\end{rem}

\subsection{Exotic MOTSs in Schwarzschild}
\label{sec:Sch}

We now consider a concrete application of  \Cref{th:revolution}. For that, we focus on the exotic MOTSs in the Schwarzschild spacetime, which originally appeared in \cite{Booth:2020qhb}.

For Painlev\'e--Gullstrand slices of Schwarzschild we have
\begin{align}
    g &= \dd r \otimes \dd r + r^2 \left( \dd \theta \otimes \dd \theta + \sin^2 \! \theta \, \dd \phi \otimes \dd \phi \right) \\
    K &= \sqrt{\frac{m}{2r^3}}\dd r \otimes \dd r  -\sqrt{2mr} \left( \dd \theta \otimes \dd \theta + \sin^2 \! \theta \, \dd \phi \otimes \dd \phi \right)
\end{align}
in spherical coordinates $(r, \theta, \phi)$. Parameterizing axisymmetric surfaces in $\Sigma$ by $\big(P(s), \Theta(s)\big)$ as in \eqref{PTheta}, and letting
$s$ be the arc-length along curves of constant $\phi$ as measured from the ``north pole,'' 
the equation $\theta_+=0$ becomes a pair of second-order differential equations for $P$ and 
$\Theta$; see \cite{Booth:2021sow}. We do not need their exact form here but show some solutions to the equations in \Cref{fig:loop}. These so-called ``MOTSodesics'' are rotated
into MOTSs by the action of $\vectorbasis{\phi}$. 

The exact form of the stability operator is also not
important here, we just need to know if it is similar to a self-adjoint operator. To that end, one can calculate the H\'aji\v{c}ek one-form
\begin{equation}
    \omega = 3 \sqrt{\frac{m}{2P}} \dot{P} \dot{\Theta} \,\dd s \, .
\end{equation}
This is closed and hence exact, since $H^1(S^2; \bbR)$ is trivial. We can thus use \Cref{th:revolution} to bound the number of negative eigenvalues for $L_\cS$.

From the parameterized curves given in \Cref{fig:loop} (a version of which originally appeared in \cite{Booth:2020qhb}), we observe that $C(P) = 1, 3, 5, 7, 9$, going from left to right, therefore $L_{\cS}$ has at least $3, 7, 11, 15, 19$ negative eigenvalues for these solutions. More generally, a solution with $n$ loops has $2n-1$ critical points ($n$ minima and $n-1$ maxima), so there are at least $2(2n-1) + 1 = 4n-1$ negative eigenvalues.

\begin{figure}[ht!]
\includegraphics[width=0.9\linewidth]{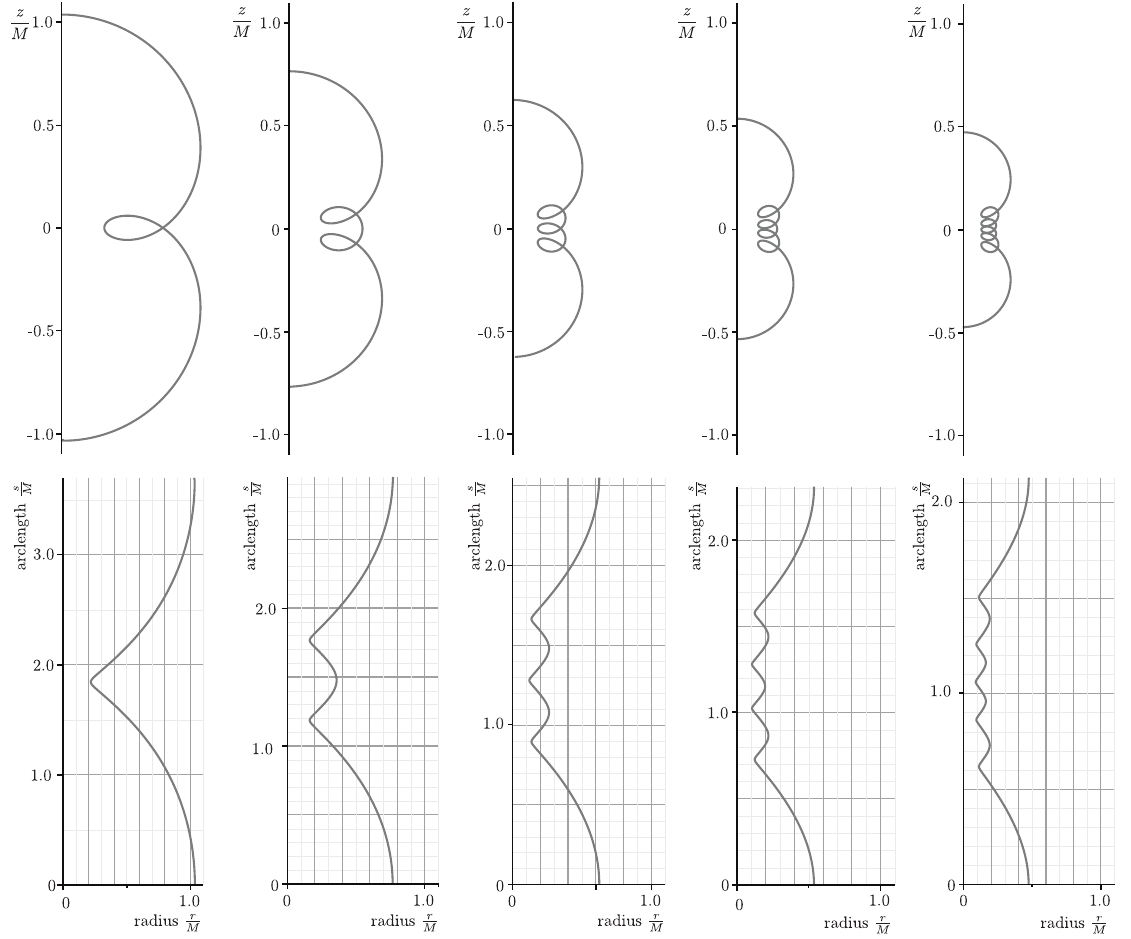}
\caption{Top: solution curves $(r,\theta) = \big(P(s), \Theta(s)\big)$ to the ``MOTSodesic equations" in the $\phi=0$ plane. Bottom: the radial function $P(s)$.}
\label{fig:loop}
\end{figure}

\subsection{Spatial translations and the apparent horizon}
\label{sec:translation}

An immediate consequence of \Cref{cor:boundary} is that if $(\Sigma,g,K)$ admits a symmetry given by a (global) coordinate vector field, then every MOTS that bounds a compact region is unstable. 
Similarly, if $(\Sigma,g,K)$ admits a non-vanishing symmetry, then any MOTS with $\chi(\cS) \neq 0$ is unstable.

We now apply these results to the apparent horizon. Recalling that a surface is \emph{weakly outer trapped} if $\theta_+ \leq 0$, we define the \emph{outer trapped region}, $\cT$, to be the union of all regions in $\Sigma$ with weakly outer trapped boundary. The \emph{apparent horizon} is then defined to be the boundary of the outer trapped region. It is well known (see, for instance \cite[Thm.~12.2.5]{Wald}) that if $\p\cT$ is smooth, then it must have $\theta_+ = 0$. If, in addition, it is compact, then it is a stable MOTS, so \Cref{cor:boundary} implies the following.

\begin{cor}
\label{thm:AHcoordinate}
Suppose $X$ is a symmetry of $(\Sigma,g,K)$. If $X$ is a coordinate vector field, then $\p\cT$ is not a compact hypersurface.
\end{cor}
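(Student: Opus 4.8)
The plan is to argue by contradiction, assembling the facts already recorded in the surrounding discussion together with \Cref{cor:boundary}. So I would begin by supposing, contrary to the claim, that the apparent horizon $\p\cT$ is a compact hypersurface. Being the topological boundary of the open set $\cT$, it is automatically embedded, so $\cS := \p\cT$ is of the right type to feed into \Cref{cor:boundary}.

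First I would invoke the cited result \cite[Thm.~12.2.5]{Wald}: since $\p\cT$ is smooth it satisfies $\theta_+ = 0$ and is therefore a MOTS, and since it is in addition compact it is a \emph{stable} MOTS. This stability is the conclusion that the rest of the argument will contradict. I would then produce a compact region $A \subset \Sigma$ with $\p A = \p\cT$, the natural candidate being $A = \overline{\cT}$. With such an $A$ in hand, \Cref{cor:boundary}(\ref{cor2}) applies essentially verbatim: $\cS = \p\cT$ is an embedded MOTS bounding the compact region $A$, and by hypothesis $X$ is a (global) coordinate vector field, hence in particular a coordinate vector field on $A$ in the sense of the footnote to that theorem. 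The corollary then forces $\p\cT$ to be \emph{unstable}, directly contradicting the stability established a moment earlier. This contradiction shows $\p\cT$ cannot be a compact hypersurface.

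The one step that is not purely formal is verifying that $\p\cT$ bounds a compact region, i.e.\ that $\overline{\cT}$ is compact, and I expect this to be the main (and really the only) point requiring care. A priori a compact embedded hypersurface need only \emph{separate} $\Sigma$, with the ``trapped side'' possibly non-compact, so compactness of $\p\cT$ alone does not formally deliver a compact $A$. To close this gap I would use the structure of $\cT$ as the union of all regions with weakly outer trapped boundary: each such region is bounded, and one argues that the portion of $\cT$ abutting the compact surface $\p\cT$ has compact closure, so that $\overline{\cT}$ is the desired compact region. Once this is in place the remainder is an immediate chaining of the quoted stability statement against \Cref{cor:boundary}, which matches the billing of the result as a direct corollary. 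The degenerate cases $\cT = \varnothing$ or $\cT = \Sigma$ (as in the FLRW example) give $\p\cT = \varnothing$, which is consistent with the conclusion and needs no separate treatment.
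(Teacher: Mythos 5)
Your proof is correct and follows essentially the same route as the paper, which likewise observes that a smooth, compact $\p\cT$ satisfies $\theta_+=0$ by the cited result in Wald, is a stable MOTS, and is then contradicted by the instability conclusion of \cref{cor2} of \Cref{cor:boundary}. The one place you go beyond the paper is your (rightly flagged) concern that $\p\cT$ must actually bound a \emph{compact} region $A$ before \Cref{cor:boundary} can be invoked; the paper leaves this entirely implicit, and your sketch of why $\overline{\cT}$ should be compact is at least as careful as the paper's own treatment of that point.
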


In other words, one of the following must occur: 1) $\p\cT$ is empty; 2) $\p\cT$ is not smooth; 3) $\p\cT$ is a smooth, non-compact hypersurface with $\theta_+ = 0$.

The smoothness of the apparent horizon is a delicate issue. If $\p\cT$ is piecewise smooth, then it is in fact smooth and has $\theta_+ = 0$, by \cite[Prop.~7]{KH1997}. On the other hand, in \cite[Thm.~7.3]{AMhorizons} it was shown that $\p\cT$ is smooth if the initial data set is compact and has outer untrapped boundary. (For instance, if $(\Sigma,g,K)$ is asymptotically flat, this can be applied to a large coordinate sphere in the asymptotic region.) While this latter theorem has no a priori regularity requirements on $\p\cT$, the required boundary behaviour rules out initial data sets with translational symmetry.

A similar result to \Cref{thm:AHcoordinate} holds if we assume that $X$ is nowhere vanishing and $\chi(\p\cT) \neq 0$. Furthermore, this topological restriction can be eliminated in three-dimensional initial data sets satsfying the dominant energy condition (DEC), since in that case the apparent horizon must have spherical topology (assuming it is smooth and compact).

\begin{cor}
\label{thm:AHvanishing}
Let $X$ be a symmetry of a three-dimensional initial data set $(\Sigma,g,K)$ that satisfies the dominant energy condition. If $\p\cT$ is a compact hypersurface, then it must intersect the zero set of $X$.
\end{cor}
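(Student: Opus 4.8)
The plan is to realize $\p\cT$ as a stable MOTS that bounds a compact region and then invoke \Cref{thm:3D} directly, using the dominant energy condition only to supply the topological hypothesis $\chi(\p\cT) \neq 0$. As a first step I would record what has already been recalled just before \Cref{thm:AHcoordinate}: since $\p\cT$ is assumed to be a compact hypersurface, the Wald-type argument \cite[Thm.~12.2.5]{Wald} shows it is smooth with $\theta_+ = 0$, and as the boundary of the outer trapped region it is in fact a \emph{stable} MOTS. Thus two of the three hypotheses of \Cref{thm:3D}, stability and $\theta_+=0$, come for free.

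The second step is to check that $\p\cT$ bounds a compact region. The outer trapped region $\cT$ is an open subset of $\Sigma$ whose topological boundary is $\p\cT$; since $\p\cT$ is compact and $\cT$ is the enclosed trapped region, its closure $A := \overline{\cT}$ is a compact region with $\p A = \p\cT$. (In the asymptotically flat setting this is immediate, since a compact separating hypersurface divides $\Sigma$ into a bounded and an unbounded piece and the trapped region is the bounded one.) The third step is the topological input: by the Galloway--Schoen topology theorem together with the outermost character of the apparent horizon, a stable compact MOTS in a three-dimensional initial data set satisfying the dominant energy condition must be a sphere, so $\chi(\p\cT) = 2 \neq 0$; compare \cite[Lem.~9.2]{AMS2007} for the strictly stable case. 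With $\p\cT$ now identified as a stable MOTS bounding the compact region $A$ and having nonzero Euler characteristic, \Cref{thm:3D} applies verbatim and yields $\p\cT \cap \{X = 0\} \neq \emptyset$, which is the assertion.

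The main obstacle is the topological step, where one must exclude the borderline toroidal case of the Galloway--Schoen rigidity theorem: it is here that the outermost, apparent-horizon character of $\p\cT$, rather than mere stability, is essential, since a generic marginally stable MOTS could in principle be toroidal (as the example in \eqref{ex:marginal} illustrates in a different context). By contrast, the compactness of $A$ is routine once the separation property of a compact hypersurface is invoked, and the reduction to \Cref{thm:3D} is then purely mechanical.
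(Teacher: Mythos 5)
Your proposal is correct and follows essentially the same route as the paper, which obtains this corollary precisely by noting that a smooth compact $\p\cT$ is a stable MOTS bounding a compact region, that the dominant energy condition forces spherical topology (so $\chi(\p\cT) = 2 \neq 0$), and then applying \Cref{thm:3D}. The only slip is attributive and harmless: smoothness of $\p\cT$ is part of the hypothesis ``compact hypersurface'' rather than a consequence of \cite[Thm.~12.2.5]{Wald}, which supplies $\theta_+ = 0$ only once smoothness is given.
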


In particular, this implies that if $X$ is nowhere vanishing, then $\p\cT$ is not a compact hypersurface. On the other hand, when the apparent horizon \emph{is} a compact hypersurface, this gives us some clue as to where it is located.

\begin{rem}
\Cref{thm:AHcoordinate} is reminiscent of the results in \cite{Mars_2003}, where it is shown that strictly stationary spacetimes do not contain closed trapped surfaces. Our result is of a different nature, however, since it refers to \emph{marginally outer trapped} surfaces and their stability (or lack thereof).
\end{rem}

\subsection{MOTSs in FLRW}
\label{sec:FLRW}
Consider a flat FLRW cosmology. This has metric \cite{Hawking:1973uf}
\begin{equation}
    \dd s^2 = - \dd t^ 2 + a(t)^2 \left(\dd x^2 + \dd y^2 + \dd z^2 \right) 
\end{equation}
for scale factor $a(t)$. 
We are interested in a $(3+1)$-formulation and so foliate with three-surfaces $(\Sigma_t, g_t, K_t)$ of constant $t$.
These surfaces have induced metric 
\begin{equation}
    g_t =  a(t)^2 (\dd x^2 + \dd y^2 + \dd z^2) \label{cosmog}
\end{equation}
and extrinsic curvature
\begin{equation}
    K_t  = 
\left(  \frac{\dot a(t)}{a(t)} \right)   g_t ,
    \label{cosmoK}
\end{equation}
with the dot indicating a derivative with respect to $t$. For the rest of the section, we will suppress the $t$ dependence in $a$ and $\dot a$.

For any $t$, \eqref{cosmog} is simply a constant scaling of the Euclidean metric and so it has the same  Killing fields, namely the translations $ \big\{\vectorbasis{x}, \vectorbasis{y}, \vectorbasis{z} \big\} $ and the rotations. The extrinsic curvature is proportional to the metric 
\eqref{cosmog}, so each of these Killing fields is a symmetry of 
$(\Sigma_t, g_t, K_t)$.

It follows immediately from \Cref{cor:boundary} that \emph{every} MOTS in $(\Sigma_t, g_t, K_t)$ is unstable; thus, the boundary of the trapped region cannot be a MOTS. We now demonstrate this explicitly by showing that the trapped region is, in fact, all of $\Sigma$; therefore, its boundary is empty.

Since $\Sigma_t = \bbR^3$, we denote points in space by $\mathbf{x} = (x,y,z)$, so 
the two-sphere of \emph{coordinate radius} $r_o$, centred at $\mathbf{x}_o = (x_o,y_o,z_o)$, is given by
\begin{equation}
    (x - x_o)^2 + (y - y_o)^2 + (z - z_o)^2 = r_o^2 \; . 
\end{equation}
The mean curvature of such a coordinate sphere with respect to $g_t$ is $H  = 2/a r_o$
and so, using \eqref{cosmoK}, the outward null expansion is
\begin{align}
    {\theta}_+ & = \frac{2(1 + \dot{a} r_o )}{a r_o }   \; . \label{tp_cosmo}
\end{align}
If $\dot{a} < 0$ at the time in question, it follows that any coordinate sphere of radius $r_o = - 1/\dot{a}$ is a MOTS in $(\Sigma_t, g_t, K_t)$, and 
any coordinate sphere of radius $r_o < - 1/\dot{a}$ is an outer trapped surface. This justifies the claim made above that the trapped region is all of $\Sigma$.

The action of the translational Killing vector fields on the coordinate spheres is simply to move the centre point $\mathbf{x}_o$. From \Cref{thm:main} we know that each translational vector field $X$ generates a non-trivial function $\ef = g(X,\nor)$ in the kernel of $L_\cS$. In fact, each of the translations $ \big\{\vectorbasis{x}, \vectorbasis{y}, \vectorbasis{z} \big\}$ generates a linearly independent function, which we now calculate. Taking $\mathbf{x}_o = (0,0,0)$ for simplicity, the unit normal to $\cS$ is 
\begin{equation}
    \nor = \frac{1}{a r_o} \mathbf{x},
\end{equation}
and so at any point $\mathbf{x} = (r_o \sin \theta \cos \phi, r_o \sin \theta \sin \phi, r_o \cos \theta)$ on $\cS$ we have
\begin{align}
    g \left(\vectorbasis{x}, \nor \right) & = \frac{a}{r_o} x = a \sin \theta \cos \phi,  \nonumber \\
    g \left(\vectorbasis{y}, \nor \right) & =\frac{a}{r_o} y  = a \sin \theta \sin \phi, \label{eq:eigenSym}\\
    g \left(\vectorbasis{z}, \nor \right) & = \frac{a}{r_o} z = a \cos \theta,  \nonumber
\end{align}
which are linear combinations of the $l=1$ spherical harmonics. 

In this case, $L_\cS$ can also be computed directly. Calculating the terms of \eqref{LS}, the Ricci scalar of 
the coordinate sphere is $R^{(2)} = 2/(ar_o)^2$, while by direct calculation the Einstein terms are
$ G_{++} = -2 (a \ddot{a} - \dot{a}^2)/a^2$ and $G_{+-} = 2 (a \ddot{a} + \dot{a}^2)/a^2$,  whence 
\begin{equation}
    \frac{1}{2} \left( G_{++} + G_{+-} \right) = \frac{2 \dot{a}^2}{a^2} = \frac{2}{a^2 r_o^2}
\end{equation}
with the second equality following from \eqref{tp_cosmo}. 
Since $\omega=0$, the stability operator takes the form
\begin{equation}
  L_\cS \psi =- \frac{1}{r_o^2} \left(\Delta \psi  +2 \psi \right) 
\end{equation}
where $\Delta$ is the Laplacian on the unit sphere. 
For each $l = 0,1,2,\ldots$ we have that $\big(l(l+1) - 2 \big)/r_o^2$ is an eigenvalue of multiplicity $2l+1$, with eigenspace spanned by the spherical harmonics $Y_l^m(\theta,\phi)$ for $|m| \leq l$. In particular, $L_\cS$ has one negative eigenvalue, with $l=0$, and a three-dimensional kernel spanned by the $l=1$ spherical harmonics, as seen above.

The advantage of \Cref{thm:main} is that it allows us to determine these eigenfunctions directly from symmetry principles, without even writing down an explicit formula for $L_\cS$.

\begin{rem}
It was shown in \cite[Thm.~3]{CM09} that there are no stable MOTSs in \emph{any} spacelike slice of the FLRW spacetime, provided $a(t)$ satisfies certain inequalities. The proof relies on the existence of a timelike conformal Killing vector field. Our result is different in that it requires symmetry of the slice, but no conditions are imposed on $a(t)$.
\end{rem}

\section{Proofs}
\label{sec:proofs}


\subsection{Proof of \Cref{thm:main}} 
Throughout this subsection, we assume the hypotheses of the theorem, namely that $\cS$ is a MOTS in $(\Sigma,g,K)$ and $X$ is a symmetry of $(\Sigma,g,K)$ but not a symmetry of $\cS$.

\begin{lemma}
\label{lem:0}
$0$ is an eigenvalue of $L_{\cS}$, with eigenfunction $\ef = g(X,\nor)$.
\end{lemma}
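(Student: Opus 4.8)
The plan is to combine the deformation characterization of $L_\cS$ with the observation that the flow of a symmetry carries MOTSs to MOTSs, so that the corresponding variation of $\theta_+$ must vanish. Recall from \eqref{stabdef} and the discussion in \Cref{sec:stability} that for a MOTS one has $\delta_Y \theta_+ = L_\cS \psi$ with $\psi = g(Y,\nor)$, and that only the normal component of $Y$ contributes. Taking $Y = X$ immediately gives $L_\cS \ef = \delta_X \theta_+$, so the whole lemma reduces to two claims: (i) the variation $\delta_X \theta_+$ vanishes, and (ii) $\ef \not\equiv 0$.

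For (i), I would let $\phi_t$ be the (local) flow of $X$ and set $\cS_t = \phi_t(\cS)$, which is precisely the deformation of $\cS$ along $X$. Because $\LX g = \LX K = 0$, each $\phi_t$ is an isometry of $(\Sigma,g)$ preserving $K$; being connected to the identity, it also preserves orientation and sends the outward normal $\nor$ of $\cS$ to the outward normal of $\cS_t$. Every term in the outward null expansion \eqref{theta+} — the trace of $K$, the scalar $K(\nor,\nor)$, and the mean curvature $H$ — is built from $g$, $K$, the surface, and its outward normal, all of which are preserved by $\phi_t$. Hence $\theta_+$ on $\cS_t$, pulled back by $\phi_t$, coincides with $\theta_+ \equiv 0$ on $\cS$ for all small $t$, and differentiating this identity at $t=0$ gives $\delta_X \theta_+ = 0$. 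Combined with the previous paragraph, $L_\cS \ef = 0$.

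For (ii), I would argue by contradiction: if $\ef = g(X,\nor) \equiv 0$ on $\cS$, then $X$ is everywhere orthogonal to $\nor$, hence everywhere tangent to $\cS$, and \Cref{def:symm} would then make $X$ a symmetry of $\cS$ — contradicting the hypothesis. Therefore $\ef$ is a nonzero element of $\ker L_\cS$, so $0$ is an eigenvalue with eigenfunction $\ef$, as claimed.

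The step needing the most care is the invariance argument in (i), specifically checking that $\phi_t$ maps the outward normal to the outward normal rather than to its opposite, so that $\theta_+$ (and not $\theta_-$) is the quantity preserved; this is where connectedness of $t \mapsto \phi_t$ to the identity is essential. The argument is otherwise robust and applies verbatim to immersed $\cS$, since one simply replaces $\phi_t(\cS)$ by the composition of the immersion with $\phi_t$.
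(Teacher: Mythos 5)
Your proposal is correct and follows essentially the same route as the paper: both deduce $\delta_X \theta_+ = 0$ from the fact that the flow $\varphi_t$ of $X$ preserves $(g,K)$ and hence carries the MOTS $\cS$ to MOTSs $\cS_t = \varphi_t(\cS)$ in the same initial data set, and both obtain $\ef \not\equiv 0$ from the hypothesis that $X$ is not everywhere tangent to $\cS$. Your extra attention to the flow preserving the \emph{outward} orientation of $\nor$ (via connectedness to the identity) and to the immersed case is a sound refinement of details the paper leaves implicit.
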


\begin{proof}
We first observe that for any diffeomorphism $\varphi$ of $\Sigma$, the surface $\varphi(\cS)$ is a MOTS in the initial data set $\big(\Sigma, (\varphi^{-1})^*g, (\varphi^{-1})^*K\big)$. Now let $\varphi_t$ denote the flow generated by $X$. The symmetry conditions $\LX g =\LX K = 0$ guarantee that $g$ and $K$ are invariant under this flow, meaning that  for all $t$ we have $(\varphi_t^{-1})^*g = (\varphi_{-t})^*g = g$, and analogously for $K$. It follows that $\cS_t = \varphi_t(\cS)$ is a MOTS in the initial data set $(\Sigma,g,K)$, therefore
\begin{equation}
    0 = \delta_X \theta_+ = L_{\cS}\ef,
\end{equation}
with $\ef= g(X,\nor)$. Since $X$ is not everywhere tangent to $\cS$, the function $\ef$ is not identically zero, and hence is an eigenfunction of $L_{\cS}$ with $0$ eigenvalue.
\end{proof}

The next lemma characterizes when $0$ is the \emph{principal} eigenvalue of $L_\cS$.

\begin{lemma}
The principal eigenvalue of $L_\cS$ is $0$ if and only if $\ef$ is sign definite.
\end{lemma}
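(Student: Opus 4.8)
The plan is to reduce everything to the Perron--Frobenius (Krein--Rutman) structure of the elliptic operator $L_\cS$ that was recalled in the introduction: the principal eigenvalue $\prin$ is real, it is (algebraically and geometrically) simple, and its eigenfunction is sign definite. To these I would add one further standard consequence of the same theory, namely the \emph{uniqueness} of the sign-definite eigenfunction: $\prin$ is the \emph{only} eigenvalue of $L_\cS$ admitting a nowhere-vanishing eigenfunction. Granting this, both directions follow immediately from \Cref{lem:0}, which already tells us that $0$ is an eigenvalue with eigenfunction $\ef$, and that $\ef$ is not identically zero because $X$ is not everywhere tangent to $\cS$.

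For the ``if'' direction, suppose $\ef$ is sign definite. Then $\ef$ is a nowhere-vanishing eigenfunction of $L_\cS$ with eigenvalue $0$, so by the uniqueness statement its eigenvalue must be the principal one, giving $\prin = 0$. For the ``only if'' direction, suppose $\prin = 0$. By simplicity the $0$-eigenspace is one-dimensional and is spanned by the principal eigenfunction, which is sign definite; since $\ef$ is a nonzero element of this eigenspace, it is a scalar multiple of that principal eigenfunction and is therefore sign definite as well.

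The hard part, and the only step that needs genuine care, is justifying the uniqueness of the sign-definite eigenfunction in the non-self-adjoint case $\omega \neq 0$. Here I would pass to the formal adjoint $L_\cS^*$, whose principal eigenvalue is again $\prin$ and whose principal eigenfunction $\phi^*$ is likewise sign definite. For any eigenvalue $\mu \neq \prin$ of $L_\cS$ with eigenfunction $\psi$, biorthogonality against the adjoint principal eigenfunction yields $\int_\cS \psi\, \phi^*\, \dd A = 0$; but a sign-definite $\psi$ cannot integrate to zero against a sign-definite $\phi^*$, a contradiction. Hence $\mu = \prin$, which is exactly the uniqueness claimed. In the self-adjoint case this is just the orthogonality of eigenfunctions for distinct eigenvalues, while the existence of the positive adjoint eigenfunction and the simplicity of $\prin$ are precisely the content of Krein--Rutman applied to the (compact, positivity-improving) resolvent of $L_\cS$, as in the references underlying the spectral discussion of the introduction.
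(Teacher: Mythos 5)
Your proposal is correct and takes essentially the same route as the paper: the ``only if'' direction via the simplicity and sign-definiteness of the principal eigenfunction from \cite{AMS2005}, and the ``if'' direction via the pairing $\prin \int_{\cS} \phi\, \ef = \int_{\cS} \phi\, L_{\cS} \ef = 0$ with the sign-definite principal eigenfunction $\phi$ of the adjoint $L_{\cS}^*$. The only difference is organizational: you package the adjoint-pairing computation as a standalone uniqueness claim (no eigenvalue other than $\prin$ has a sign-definite eigenfunction), whereas the paper runs the identical computation in contrapositive form, concluding that $\ef$ changes sign when $\prin < 0$.
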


Here \emph{sign definite} is meant in the strict sense, i.e., everywhere positive or everywhere negative.

\begin{proof}
If $\ef$ is the principal eigenfunction, then it must be sign definite by \cite[Lem.~1]{AMS2005}. If $\ef$ is not the principal eigenfunction, then the principal eigenvalue $\prin$ is negative (and in particular nonzero). This is also the principal eigenvalue of the adjoint $L_{\cS}^*$, and the corresponding eigenfunction $\phi$ is sign definite (again by \cite{AMS2005}). We then compute
\[
	\prin \int_{\cS} \phi \ef = \int_{\cS} (L^*_{\cS} \phi) \ef = \int_{\cS} \phi L_{\cS} \ef = 0
\]
and conclude that $\ef$ changes sign.
\end{proof}

Observing that $\ef = g(X,\nor)$ is sign definite if and only if $X$ is nowhere tangent to $\cS$ completes the proof of \Cref{thm:main}.

\subsection{Proof of \Cref{cor:boundary}}

\Cref{cor1} is an immediate consequence of \Cref{thm:main} and the following.

\begin{lemma}
\label{lem:div}
If $\cS$ is the boundary of a compact region $A \subset \Sigma$ and $X$ is a Killing vector field on $A$, then $X$ is tangent to $\cS$ at some point.
\end{lemma}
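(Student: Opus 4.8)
The label \texttt{lem:div} already points to the method: the proof should be an application of the divergence theorem. The plan is to show that the normal component $\ef = g(X,\nor)$ has vanishing average over $\cS$, and then to upgrade this to a genuine zero of $\ef$ at some point, which is precisely the assertion that $X$ is tangent to $\cS$ there. The first thing I would record is that a Killing field is divergence-free: the hypothesis $\LX g = 0$ is Killing's equation $\nabla_a X_b + \nabla_b X_a = 0$, and tracing against $g^{ab}$ gives $\divv X = g^{ab}\nabla_a X_b = 0$. (Equivalently, the flow of $X$ is by isometries, hence preserves the volume form, so $\LX \dd V = (\divv X)\,\dd V = 0$.) Note that only $\LX g = 0$ enters here; the condition $\LX K = 0$ from the definition of a symmetry plays no role in this lemma.

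Next I would apply the divergence theorem to the smooth field $X$ on the compact region $A \subset \Sigma$, whose smooth boundary is $\cS$ and which sits inside the orientable manifold $\Sigma$:
\[
    \int_A \divv X \, \dd V = \int_{\cS} g(X,\nor)\,\dd S = \int_{\cS} \ef \, \dd S,
\]
with $\nor$ the outward unit normal appearing in the definition of $\ef$. By the previous step the integrand on the left vanishes identically, so $\int_{\cS} \ef \, \dd S = 0$.

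Finally I would promote this to a pointwise statement. The function $\ef$ is continuous on $\cS$; if it were everywhere positive (or everywhere negative) its integral could not vanish, so $\ef$ must attain both a nonnegative and a nonpositive value (unless it is identically zero, in which case $X$ is tangent everywhere and we are done). Since $\cS$ is connected, the intermediate value theorem then produces a point $p \in \cS$ with $\ef(p) = g(X_p,\nor_p) = 0$, i.e.\ $X_p \in T_p\cS$, which is the desired conclusion.

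I expect the last step to be the only delicate point. The divergence theorem controls only the \emph{average} of $\ef$, and turning a vanishing average into an actual zero requires that $\cS$ be connected: on a disconnected boundary one could have $\ef$ of opposite constant signs on different components, giving zero total integral but no point of tangency (a cylindrical slab $S^1\times[0,1]$ with $X=\partial_z$ is exactly this situation). This is where I would invoke that $\cS$ is a connected closed MOTS bounding $A$, and I would make sure that connectedness is stated explicitly rather than used tacitly.
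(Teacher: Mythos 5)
Your proof is correct and takes exactly the paper's route: a Killing field satisfies $\divv X = 0$, the divergence theorem over $A$ gives $\int_{\cS} g(X,\nor) = 0$, and continuity then forces $\ef = g(X,\nor)$ to vanish at some point. Your closing caveat about connectedness is a genuine subtlety that the paper's one-line conclusion leaves tacit\,---\,for disconnected $\cS$ the vanishing integral alone does not yield a point of tangency (your slab example $S^1\times[0,1]$ with $X=\partial_z$ shows this), so making the connectedness of $\cS$ explicit, as you do, is a small but real improvement over the published argument.
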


\begin{proof}
$X$ being a Killing vector field implies that $\divv X = 0$, and by the divergence theorem 
\[
	0 = \int_A \divv X = \int_{\cS} g(X,\nor),
\]
thus $g(X,\nor)$ has to vanish somewhere on $\cS$.
\end{proof}

We next prove \cref{cor2}. We know from \Cref{lem:div} that $X$ is tangent to $\cS$ somewhere, so we just need to show that it is not everywhere tangent to $\cS$.

\begin{lemma}
\label{lem:coord}
Suppose $\cS$ is the boundary of a compact region $A \subset \Sigma$ and $X$ is a Killing vector field on $A$. If $X = \vectorbasis{x^n}$ for some coordinates $(x^1, \ldots, x^n)$ defined on an open set $U$ containing $A$, then $X$ is not everywhere tangent to $\cS$.
\end{lemma}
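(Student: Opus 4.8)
The plan is to produce a single point of $\cS$ at which $X$ is not tangent, namely the point where the coordinate $x^n$ is maximized along $\cS$. First I would observe that $\cS = \p A$ is compact (it is the boundary of the compact region $A$, and a MOTS is compact by definition) and that $x^n$ is a smooth function on the open set $U \supset A$. Hence the restriction $x^n|_\cS$ attains a maximum at some point $p \in \cS$. At such a maximum the tangential derivative of $x^n$ vanishes; equivalently, the gradient $\nabla x^n(p)$ has no component in $T_p\cS$, so it is parallel to the unit normal $\nor_p$. This is exactly the same maximum-principle idea used in the proof of \Cref{thm:sphere}, applied now to the coordinate $x^n$ in place of the radial coordinate.

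The key computation is then to check that $\nabla x^n(p)$ is nonzero and pairs nontrivially with $X$. Since $X = \vectorbasis{x^n}$, the defining property of the gradient gives $g(\nabla x^n, X) = dx^n(X) = X(x^n) = 1$ identically on $U$. In particular $\nabla x^n(p) \neq 0$, for otherwise $dx^n$ would annihilate $X$ at $p$. Writing $\nor_p = \pm\, \nabla x^n(p)/|\nabla x^n(p)|_g$, I would conclude that
\[
    g(X,\nor_p) = \pm\,\frac{g(X,\nabla x^n(p))}{|\nabla x^n(p)|_g} = \pm\,\frac{1}{|\nabla x^n(p)|_g} \neq 0,
\]
so $X$ has a nonvanishing normal component at $p$ and therefore is not tangent to $\cS$ there. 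This is precisely the assertion of the lemma.

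I do not expect a genuine obstacle here; the remaining points are bookkeeping. One must note that $\nor_p$ is well defined, which is guaranteed because $\cS$ is an embedded hypersurface in the hypotheses of \Cref{cor:boundary}, and that the maximizing point $p$ lies in the region where the coordinates are valid, which holds since $(x^1,\dots,x^n)$ are defined on the open set $U$ containing $A \supseteq \cS$. I would also remark that the argument is the rigorous form of the intuition that the flow of $\vectorbasis{x^n}$ is translation in the $x^n$ direction, which strictly increases $x^n$ and hence cannot preserve the compact surface $\cS$; passing to the maximizing point $p$ avoids any appeal to completeness of the flow or to the flow remaining inside $U$.
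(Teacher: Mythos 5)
Your proof is correct, but it proceeds by a genuinely different route than the paper's. The paper argues integrally: because $X$ is Killing, $\divv X = 0$, so the auxiliary field $Y := x^n X$ has $\divv Y = g\big(\nabla x^n, \vectorbasis{x^n}\big) = 1$, and the divergence theorem gives $\int_{\cS} x^n\, g(X,\nor) = \int_A \divv Y = \mathrm{vol}(A) \neq 0$, so $g(X,\nor)$ cannot vanish identically. You instead argue pointwise, transplanting the maximum trick from the paper's own proof of \Cref{thm:sphere}: at a maximum $p$ of $x^n|_{\cS}$ the gradient $\nabla x^n(p)$ is normal to $\cS$, it is nonzero since $g(\nabla x^n, X) = dx^n(X) \equiv 1$ on $U$, and hence $g(X,\nor_p) = \pm 1/|\nabla x^n(p)|_g \neq 0$. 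Each approach buys something. Yours is more elementary (no integration) and, notably, never invokes the Killing hypothesis, so it establishes a slightly stronger statement\,---\,a coordinate vector field on a neighbourhood of $A$ is somewhere transverse to the compact hypersurface $\cS$, Killing or not\,---\,and it exhibits an explicit point of transversality. (The Killing property remains essential in the companion \Cref{lem:div}, which supplies the complementary tangency point needed for the instability conclusion in \Cref{cor:boundary}; your maximum/minimum pair does give opposite signs of $g(X,\nor)$, since $dx^n(X)=1$ forces the extrema of $x^n$ over $A$ onto $\cS$, but this yields a tangency point only when $\cS$ is connected, so it does not replace that lemma.) The paper's computation, for its part, yields the quantitative identity $\int_{\cS} x^n\, g(X,\nor) = \mathrm{vol}(A)$ and handles disconnected boundaries and the location of zeros in one stroke. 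The bookkeeping you flag\,---\,compactness of $\cS$, the maximizer lying in the coordinate domain, well-definedness of $\nor$\,---\,is all handled correctly.
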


\begin{proof}
Since $\divv X = 0$, the vector field $Y := x^n X$ has $\divv Y = g\big(\nabla x^n, \vectorbasis{x^n} \big) =1$. The divergence theorem then gives
\[
	\int_{\cS} x^n g(X,\nor) = \int_{\cS}  g(Y,\nor) = \int_A \divv Y  \neq 0,
\]
which implies $g(X,\nor)$ is not identically zero on $\cS$.
\end{proof}

\Cref{cor3} of \Cref{cor:boundary} is a consequence of the following.

\begin{lemma}
\label{lem:PHopf}
Suppose $\cS$ is the boundary of a compact region $A \subset \Sigma$. If $\chi(\cS) \neq 0$ and $X$ has no zeros in $A$, then $X$ is tangent to $\cS$ somewhere but not everywhere.
\end{lemma}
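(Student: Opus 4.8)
The conclusion splits into two independent claims, which I would establish separately. For the first claim---that $X$ is tangent to $\cS$ at some point---I would invoke the Killing property inherited from the setting of \Cref{cor:boundary}. Since $X$ is a symmetry of $(\Sigma,g,K)$ we have $\LX g = 0$, so $X$ is a Killing field on $A$ and in particular $\divv X = 0$. This is exactly the hypothesis of \Cref{lem:div}, whose divergence-theorem argument gives $\int_\cS g(X,\nor) = \int_A \divv X = 0$. Hence $\ef = g(X,\nor)$ must vanish somewhere on $\cS$, which is precisely the statement that $X$ is tangent to $\cS$ at that point. Note this half genuinely uses $\divv X = 0$; for a non-Killing field that is nowhere tangent, no such conclusion is available in general.

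For the second claim---that $X$ is \emph{not} everywhere tangent to $\cS$---I would argue by contradiction using the Poincar\'e--Hopf theorem, which is what motivates the hypotheses $\chi(\cS)\neq 0$ and ``$X$ has no zeros in $A$.'' Suppose $X$ were tangent to $\cS$ at every point. Then the restriction $X|_\cS$ would define a smooth section of the tangent bundle $T\cS$. Because $\cS \subset A$ and $X$ has no zeros in $A$, this section vanishes nowhere, so $\cS$ would carry a nowhere-vanishing tangent vector field. Since $\cS$ is the boundary of a compact region it is a closed (compact, without boundary) manifold, and the Poincar\'e--Hopf theorem forbids a nowhere-vanishing tangent field on such a manifold unless $\chi(\cS) = 0$. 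This contradicts $\chi(\cS)\neq 0$, so $X$ cannot be everywhere tangent to $\cS$.

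Together these two claims give the lemma. I do not anticipate a serious technical obstacle; the argument is short once the two classical inputs are identified. The only points requiring care are that the Killing hypothesis is what makes the first half work, and that the second half needs both the assumed tangency and the nonvanishing of $X$ along $\cS$ in order to produce a genuine nowhere-zero section of $T\cS$ before appealing to Poincar\'e--Hopf. One should also confirm that $\cS$ is a closed manifold, which is immediate from its being the boundary of the compact region $A$.
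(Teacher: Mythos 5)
Your second half is correct, but your first half has a genuine gap: you smuggle in the hypothesis that $X$ is Killing, which the lemma deliberately does not assume. You justify this as ``inherited from the setting of \Cref{cor:boundary},'' but the statement only assumes $X$ has no zeros in $A$, and this generality is load-bearing: the paper explicitly notes that \Cref{lem:PHopf} does not use the Killing property, and exploits this in the proof of \Cref{lem:PHopfKobayashi} (hence of \Cref{thm:3D}), where the Killing field $X$ is \emph{perturbed} in tubular neighbourhoods of its zero circles to produce a field with no zeros in $A$; the perturbed field is no longer Killing and has no reason to be divergence-free, so your divergence-theorem argument via \Cref{lem:div} cannot be applied to it. Moreover, your parenthetical claim that ``for a non-Killing field that is nowhere tangent, no such conclusion is available in general'' is exactly what the paper's proof refutes under the stated hypotheses: if $g(X,\nor) \geq 0$ on all of $\cS$, the Poincar\'e--Hopf theorem for manifolds with boundary (after pushing $X$ slightly outward to make it strictly outward-pointing, which creates no zeros since $|X|$ is bounded below on the compact set $A$) forces $\chi(A) = 0$; but $\chi(\cS) \neq 0$ means $\cS$ is even-dimensional, so $A$ is odd-dimensional and $\chi(A) = \tfrac12\chi(\cS) \neq 0$, a contradiction. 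Hence $g(X,\nor) < 0$ somewhere, and applying the same argument to $-X$ gives $g(X,\nor) > 0$ somewhere, so it vanishes at some point. The missing ideas in your proposal are thus the boundary version of Poincar\'e--Hopf and the identity $\chi(A) = \tfrac12 \chi(\p A)$ for odd-dimensional compact $A$; the divergence theorem is the wrong tool for this half.

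Your second half---if $X$ were everywhere tangent, its restriction would be a nowhere-vanishing section of $T\cS$, contradicting the closed-manifold Poincar\'e--Hopf theorem since $\chi(\cS) \neq 0$---is correct, requires no Killing assumption, and is in fact a slightly more elementary route than the paper's, which extracts ``not everywhere tangent'' as a byproduct of the same boundary argument (the function $g(X,\nor)$ attains both signs, hence is not identically zero). One small caution that applies to both your Killing argument and the paper's: deducing a zero of $g(X,\nor)$ from $\int_\cS g(X,\nor) = 0$, or from a sign change, tacitly uses connectedness of $\cS$; for disconnected $\cS$ the function could be positive on one component and negative on another without vanishing.
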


\begin{proof}
If $g(X,\nor) \geq 0$ on all of $\cS$, we can use the Poincar\'e--Hopf theorem to obtain $\chi(A) = 0$, since $X$ has no zeros. On the other hand, if $\chi(\cS) \neq 0$ then $\cS$ must be even dimensional, in which case $\chi(A) = \frac12 \chi(S) \neq 0$. This contradiction implies $g(X,\nor)(p) < 0$ at some point $p \in S$. Applying the same argument to $-X$, we find a point $p' \in \cS$ with $g(X,\nor)(p') > 0$. It follows that $g(X,\nor)$ equals zero somewhere but not everywhere.
\end{proof}

\subsection{Proof of \Cref{thm:3D}}
Note that \Cref{lem:PHopf} does not use the fact that $X$ is Killing. In three dimensions we can use this information to obtain the following refinement, which implies \Cref{thm:3D}.

\begin{lemma}
\label{lem:PHopfKobayashi}
Suppose $\dim\Sigma =3$ and $\cS$ is the boundary of a compact region $A \subset \Sigma$. 
If $\chi(\cS) \neq 0$ and $X$ is a Killing field with no zeros on $\cS$, then $X$ is tangent to $\cS$ somewhere but not everywhere.
\end{lemma}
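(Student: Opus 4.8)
The plan is to mirror the Poincar\'e--Hopf argument used for \Cref{lem:PHopf}. The only difference between the two statements is that $X$ is now permitted to vanish in the interior of $A$, so the total index of $X$ on $A$ is no longer automatically zero. The crux will be to show that, in three dimensions and for a \emph{Killing} field, the interior zeros still contribute nothing to the index, after which the argument of \Cref{lem:PHopf} applies verbatim.

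First I would reduce, exactly as in \Cref{lem:PHopf}, to ruling out the possibility that $g(X,\nor)\ge 0$ everywhere on $\cS$: once this is excluded we get $g(X,\nor)<0$ somewhere, the same argument applied to $-X$ gives $g(X,\nor)>0$ somewhere, and since $\cS$ is connected the intermediate value theorem produces a zero of $g(X,\nor)$ (so $X$ is tangent somewhere) while the two strict values show $g(X,\nor)\not\equiv 0$ (so $X$ is not tangent everywhere). So assume $g(X,\nor)\ge 0$ on $\cS$. Because $X$ has no zeros on $\cS$, I can perturb $X$ near the boundary by a small outward multiple of $\nor$ to make it point strictly outward, without creating or destroying any zeros, since the perturbation is supported where $X\neq 0$. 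Poincar\'e--Hopf for a manifold with boundary then gives $\chi(A)=\operatorname{Ind}(X)$, so everything reduces to computing the total index of $X$ on $\operatorname{int}(A)$.

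Here is where the new hypotheses enter. By a classical theorem of Kobayashi, the zero set $Z$ of a nontrivial Killing field is a disjoint union of totally geodesic submanifolds, each of \emph{even} codimension. Since $X$ has no zeros on $\cS=\p A$, the set $Z\cap A$ is compact and contained in the interior; and since $\dim\Sigma=3$, even codimension forces every component of $Z\cap A$ to have codimension $2$ (codimension $0$ is impossible, as a nontrivial Killing field cannot vanish on an open set). Hence $Z\cap A$ is a finite disjoint union of closed geodesics, i.e.\ circles. To compute the localized index along such a circle $C$, note that at $p\in Z$ the linearization $\nabla X|_p$ is skew-symmetric with kernel exactly $T_pZ$, so on the $2$-dimensional normal space it is a nondegenerate infinitesimal rotation; thus $X$ is, to leading order, a transverse rotation of index $+1$ along $C$. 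Perturbing $X$ on a tubular neighbourhood of $C$ by the gradient of a small Morse function on $C$ isolates the zeros, placing one at each critical point with index $(+1)\cdot(-1)^{\mathrm{Morse\ index}}$; summing over critical points gives $(+1)\cdot\chi(C)=\chi(S^1)=0$. Therefore each circle contributes $0$, so $\operatorname{Ind}(X)=0$ and $\chi(A)=0$. This contradicts $\chi(A)=\tfrac12\chi(\cS)\neq 0$ (from $\chi(\cS)=2\chi(A)$ for the boundary of a compact $3$-manifold), completing the reduction and hence the lemma.

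The step I expect to require the most care is the index computation for the non-isolated zero set: both invoking Kobayashi's theorem to pin down that the zeros form circles, and justifying rigorously that a clean circle of transverse-rotation zeros contributes $\chi(S^1)=0$ to the Poincar\'e--Hopf count. The boundary perturbation to achieve a strictly outward-pointing field is routine but should be recorded, and the connectedness of $\cS$ should be noted for the final intermediate-value step.
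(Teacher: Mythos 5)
Your proposal is correct and follows essentially the same route as the paper: invoke Kobayashi's theorem to conclude that in dimension three the interior zero set of the Killing field is a finite union of closed geodesics (circles), and then exploit $\chi(S^1)=0$ to see that these zeros are invisible to the Poincar\'e--Hopf count on $A$. The only cosmetic difference is that the paper perturbs $X$ in tubular neighbourhoods of the circles to remove the zeros entirely and then cites \Cref{lem:PHopf}, whereas you keep the zeros and compute their clean index contribution directly via a Morse-function perturbation --- the same mechanism, carried out inline.
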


\begin{proof}
Consider the set $\mathcal Z = \{ p \in A : X_p = 0\}$. By assumption this does not intersect $\cS$. From \cite{Kob} we have $\mathcal Z = \cup V_i$, where each $V_i$ is a totally geodesic submanifold of even codimension. Since $n=3$, this means each $V_i$ is the image of a geodesic. Moreover, each $V_i$ is closed and hence compact, so it must be diffeomorphic to the circle.
It is therefore possible to perturb $X$ in the interior of $A$ (more precisely, in a tubular neighbourhood of each $V_i$) to obtain a vector field with no zeros in $A$, so the result follows from \Cref{lem:PHopf}.
\end{proof}

\subsection{Proof of \Cref{cor:genus}}
Let $\cS$ be a MOTS of genus $\gamma \geq 2$. By \Cref{thm:main}, it suffices to prove that $X$ is not a symmetry of $\cS$. If it was, it would be everywhere tangent to $\cS$ and hence would be a Killing vector field for the induced metric $q$ on $\cS$. It is well known\footnote{In two dimensions the zeros of a non-trivial Killing vector field are isolated with index $+1$, so the Poincar\'e--Hopf theorem gives $\chi(\cS) \geq 0$ and hence $\gamma \leq 1$.} that a surface of genus $\gamma \geq 2$ admits no non-trivial Killing fields, therefore $X$ must vanish on $\cS$. Since the zero set of a Killing vector field has even codimension, by \cite{Kob}, $X$ must be identically zero, contradicting \Cref{def:symm}.

\subsection{Proof of \Cref{cor:nodal}}

Assuming the hypotheses of the theorem, namely $\omega =\mathcal{D}f$ for some function $f$, \Cref{lem:Scale} guarantees that $L_\cS$ has only real eigenvalues. We write these as $\lambda_1 < \lambda_2 \leq \lambda_3 \leq \cdots$, repeated according to their multiplicity.

\begin{lemma}
Assume the hypotheses of \Cref{cor:nodal}. If $\lambda$ is an eigenvalue of $L_\cS$, then any eigenfunction for $\lambda$ has at most $\min\{k : \lambda_k = \lambda \}$ nodal domains.
\end{lemma}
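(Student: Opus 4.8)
The plan is to reduce to the self-adjoint case and then run the classical argument behind Courant's nodal domain theorem, in the sharp form that tracks multiplicities. First I would invoke \Cref{lem:Scale}: since $\omega = \mathcal{D}f$, any eigenfunction $\psi$ of $L_\cS$ for $\lambda$ has the form $\psi = e^f\tilde\psi$, where $\tilde\psi$ is an eigenfunction of the self-adjoint operator $\tilde{L}_\cS = -\mathcal{D}^A\mathcal{D}_A + V$, with $V = \tfrac12(R^{(2)} - \|\sigma_+\|^2 - G_{++} - G_{+-})$, for the same eigenvalue $\lambda$. Because $e^f$ is nowhere zero, $\{\psi \neq 0\} = \{\tilde\psi \neq 0\}$, so $\psi$ and $\tilde\psi$ have exactly the same nodal domains and $\nu(\psi) = \nu(\tilde\psi)$. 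It therefore suffices to prove the bound for the Schr\"odinger-type operator $\tilde{L}_\cS$ on the closed manifold $\cS$, whose spectrum is the real sequence $\lambda_1 < \lambda_2 \le \lambda_3 \le \cdots$ with an $L^2$-orthonormal basis of eigenfunctions $\phi_1, \phi_2, \ldots$.

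Write $m = \min\{k : \lambda_k = \lambda\}$ and suppose, for contradiction, that an eigenfunction $\tilde\psi$ for $\lambda$ has $N \ge m+1$ nodal domains $\Omega_1, \ldots, \Omega_N$. Set $v_j = \tilde\psi\,\mathbf{1}_{\Omega_j}$; since $\tilde\psi$ vanishes on the nodal set, each $v_j$ lies in $H^1(\cS)$, and $v_1, \ldots, v_{m+1}$ are linearly independent. On the $(m+1)$-dimensional space $W = \operatorname{span}\{v_1, \ldots, v_{m+1}\}$ I would impose the $m-1$ orthogonality conditions $\langle v, \phi_i\rangle = 0$ for $i = 1, \ldots, m-1$ together with the single condition that the $v_1$-coefficient vanish; these are $m$ linear constraints on an $(m+1)$-dimensional space, so a nonzero $v$ satisfying all of them exists.

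The key computation is the Rayleigh quotient $R(u) = \big(\int_\cS |\mathcal{D}u|^2 + V u^2\big)/\int_\cS u^2$. Because the $v_j$ have disjoint supports and $\tilde\psi$ solves $\tilde{L}_\cS\tilde\psi = \lambda\tilde\psi$ while vanishing on $\partial\Omega_j$, integrating by parts on each $\Omega_j$ (the boundary terms drop out) gives $\int_{\Omega_j}(|\mathcal{D}\tilde\psi|^2 + V\tilde\psi^2) = \lambda\int_{\Omega_j}\tilde\psi^2$, hence $R(v) = \lambda = \lambda_m$ for every $v \in W$. By the Courant--Fischer characterization $\lambda_m = \min\{R(u) : u \perp \phi_1, \ldots, \phi_{m-1}\}$, the function $v$ is a minimizer and therefore itself an eigenfunction of $\tilde{L}_\cS$ for $\lambda_m$. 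But $v$ vanishes identically on the open set $\Omega_1$, so the unique continuation principle for the second-order elliptic operator $\tilde{L}_\cS$ forces $v \equiv 0$, contradicting $v \neq 0$. This contradiction shows $N \le m$, which is the claim.

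The main analytic input, and the step I expect to require the most care, is the unique continuation principle guaranteeing that an eigenfunction vanishing on an open subset of $\cS$ must vanish everywhere; this is what lets the argument terminate, and is where ellipticity of $\tilde{L}_\cS$ is essential. Two supporting points also need attention: verifying that the truncations $v_j = \tilde\psi\,\mathbf{1}_{\Omega_j}$ genuinely lie in $H^1(\cS)$ (standard once one knows the nodal set is negligible and $\tilde\psi \in H^1$), and the bookkeeping in the dimension count, which is precisely what yields the sharp bound $\min\{k : \lambda_k = \lambda\}$ rather than merely some index at which $\lambda$ occurs.
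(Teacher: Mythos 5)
Your proposal is correct and follows the same route as the paper: use \Cref{lem:Scale} to pass to the self-adjoint operator $\tilde{L}_\cS$, note that $\psi = e^f\tilde\psi$ has the same nodal domains since $e^f$ is nowhere zero, and apply Courant's nodal domain theorem in its sharp (minimal-label) form. The only difference is that the paper simply cites Courant's theorem at this point, whereas you reprove it via the standard Courant--Fischer/unique-continuation argument, which is sound and correctly tracks the multiplicity bookkeeping needed for the bound $\min\{k : \lambda_k = \lambda\}$.
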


\begin{proof}
For the self-adjoint operator $\tilde L_\cS$ defined in \eqref{LSt} this is precisely the statement of Courant's nodal domain theorem. The same is true for $L_\cS$, since it has the same eigenvalues as $\tilde L_\cS$, and the corresponding eigenfunctions $\psi$ and $\tilde \psi = e^f \psi$ have the same nodal domains.
\end{proof}

\begin{rem}
The quantity $\min\{k : \lambda_k = \lambda \}$ is called the ``spectral position" or ``minimal label" of $\lambda$. For instance, if $\lambda_1 < \lambda_2 = \lambda_3 = \lambda_4$ (as for the eigenvalues of the spherical Laplacian), the common eigenvalue $\lambda_2 = \lambda_3 = \lambda_4$ has minimal label $k=2$.
\end{rem}

From \Cref{thm:main} we know that 0 is an eigenvalue of $L_{\cS}$. Observing that $L_{\cS}$ has precisely $\min\{k : \lambda_k = 0\} - 1$ negative eigenvalues completes the proof of \Cref{cor:nodal}.

\section{The marginally stable case}
\label{sec:marginal}

We now show that the marginally stable case of \Cref{thm:main}, in which the vector field $X$ is nowhere tangent to $\cS$, can only occur in certain geometries.

\begin{theorem}
\label{thm:marginal}
If $\cS$ is a MOTS in $(\Sigma,g,K)$ and $X$ is a Killing vector field transversal to $\cS$, then $\Tr_q(K)$ either vanishes identically (in which case $\cS$ is a minimal surface) or changes sign on $\cS$.
\end{theorem}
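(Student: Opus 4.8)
The plan is to reduce everything to the first variation of area along the flow of $X$, using the marginally outer trapped condition only to convert $H$ into $-\Tr_q(K)$. Since $\cS$ is a MOTS, \eqref{theta+} gives $\Tr_q(K) = -H$, so it is enough to show that the mean curvature $H$ either vanishes identically or changes sign on $\cS$. I would begin by noting that transversality of $X$ means $\ef = g(X,\nor)$ is nowhere zero on $\cS$; as $\cS$ is connected, $\ef$ is therefore sign definite, and we may assume $\ef > 0$ (this is precisely the marginally stable alternative of \Cref{thm:main}).

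Next, let $\varphi_t$ denote the flow of $X$ and put $\cS_t = \varphi_t(\cS)$. Decomposing the restriction of $X$ to $\cS$ as $X = \ef\,\nor + \tilde X$ with $\tilde X$ tangent to $\cS$, the standard first variation of area formula gives
\[
\frac{d}{dt}\Big|_{t=0}\mathrm{Area}(\cS_t) = \int_\cS \divv_\cS \tilde X \,\dd A + \int_\cS H\,\ef\,\dd A = \int_\cS H\,\ef\,\dd A,
\]
since the tangential term is a surface divergence and integrates to zero over the closed surface $\cS$ (the overall sign of the second term depends on the convention for $H$ but is immaterial below; the computation is local, so the fact that $\cS$ is only immersed causes no difficulty). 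On the other hand, because $X$ is Killing, each $\varphi_t$ is an isometry of $(\Sigma,g)$, so $\cS_t$ is isometric to $\cS$ and $\mathrm{Area}(\cS_t)$ is constant in $t$. The left-hand side therefore vanishes, and we conclude $\int_\cS H\,\ef\,\dd A = 0$, that is, $\int_\cS \Tr_q(K)\,\ef\,\dd A = 0$.

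Finally, because $\ef > 0$ everywhere, an integrand $\Tr_q(K)\,\ef$ of constant sign would make this integral strictly nonzero unless it vanished identically; hence $\Tr_q(K)$ cannot be sign definite unless $\Tr_q(K) \equiv 0$, in which case $H \equiv 0$ and $\cS$ is minimal. This is exactly the asserted dichotomy. I expect no serious obstacle: the argument is short, and its only delicate points are the vanishing of the tangential divergence term (routine on a closed manifold) and the global sign-definiteness of $\ef$ (which uses connectedness of $\cS$). It is worth remarking that the proof uses only $\LX g = 0$ and never $\LX K = 0$, consistent with the hypothesis requiring $X$ merely to be Killing rather than a full symmetry of $(\Sigma,g,K)$.
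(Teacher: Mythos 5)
Your proof is correct, and it reaches the paper's key integral identity $\int_\cS \ef\,\Tr_q(K)\,\dd A = 0$ by a different, softer route than the paper's own argument. The paper computes $\mathcal{L}_{\tilde X} q = -2\ef k$ directly from $\mathcal{L}_X g = 0$, traces it to obtain the \emph{pointwise} identity $\divv_q \tilde X = \ef\,\Tr_q(K)$ (equation \eqref{eq: condition transversal}), and only then integrates; you instead invoke the standard first variation of area together with the observation that the flow of a Killing field acts by isometries, so $\mathrm{Area}(\cS_t)$ is constant and the variation vanishes. At bottom these are the same computation — the first variation formula is proved by exactly the Lie-derivative trace the paper carries out — but your packaging buys transparency and economy (no tensor calculation, everything reduced to isometry invariance of area), while the paper's direct computation buys strength: the pointwise identity $\divv_q \tilde X = \ef\,\Tr_q(K)$ carries more information than its integral, which is presumably why it is recorded as a displayed equation. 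Your side remarks are accurate: the tangential divergence integrates to zero on a closed surface, the argument is local so immersedness is harmless, and only $\LX g = 0$ is ever used, which matches both the theorem's hypothesis (``Killing vector field'' rather than full symmetry of $(\Sigma,g,K)$) and the paper's proof, which likewise never touches $\LX K = 0$. Two trivial quibbles: the appeal to \Cref{thm:main} for sign-definiteness of $\ef$ is unnecessary overhead, since transversality plus connectedness of $\cS$ gives it directly; and if one allows disconnected $\cS$, the conclusion should be read per connected component — a caveat that applies equally to your argument and to the paper's.
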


For instance, if $K$ is sign definite, as in the FLRW example of \Cref{sec:FLRW}, the marginally stable case is prohibited.

\begin{proof}
Letting $\cS\overset{\jmath}{\hookrightarrow}\Sigma$ denote the inclusion of $\cS$ into $\Sigma$, we have
\begin{align*}
    \mathcal{L}_{ \tilde X}q=\mathcal{L}_{ \tilde X}(\jmath^*g)=\jmath^*(\mathcal{L}_{\jmath_* \tilde X}g)=\jmath^*(\mathcal{L}_{X-\ef\nor}g)=\jmath^*(\mathcal{L}_{X}g-\mathcal{L}_{\ef\nor}g)=\jmath^*(0-\ef\mathcal{L}_{\nor}g)=-2\ef k,
\end{align*}
where $k$ is the extrinsic curvature of $\cS$ in $\Sigma$. Taking the $q$-trace and using the fact that $\Tr_q(\mathcal{L}_{ \tilde X}q) = 2\divv_q \tilde X$ and $\Tr_q(k) = H = -\Tr_q(K)$ because $\cS$ is a MOTS (see \eqref{theta+}), we obtain
\begin{equation}\label{eq: condition transversal}
        \divv_q \tilde X = \ef\Tr_q(K).
\end{equation}
The divergence theorem implies that $\int_{\cS} \ef\Tr_q(K) = 0$. The transversality assumption means $\ef$ is nowhere vanishing, so the result follows.
\end{proof}

Finally, we note that there are also topological obstructions to the marginally stable case of \Cref{thm:main}. Assume we are in this case, so $X$ is a symmetry of $(\Sigma,g,K)$ that is transversal to $\cS$. It follows from \Cref{cor1} of \Cref{cor:boundary} that $\cS$ is not the boundary of a compact region in $\Sigma$. For instance, if $\Sigma$ is homeomorphic to $\bbR^n$ and $\cS$ is embedded, the marginally stable case is ruled out by the Jordan--Brouwer separation theorem.

\subsection*{Acknowledgments}
I.B. acknowledges the support of NSERC Discovery Grant  2018-04873. G.C. acknowledges the support of NSERC Discovery Grant 2017-04259. J.M-B. was supported by an Atlantic Association for Research in the Mathematical Sciences (AARMS) Post-Doctoral Fellowship along with NSERC Discovery Grants 2018-04873 and 2018-04887  and PID2020-116567GB-C22 funded by MCIN/AEI/10.13039/501100011033.

\subsection*{Statements and declarations}
The authors have no competing interests to declare that are relevant to the content of this article. Data sharing not applicable to this article as no datasets were generated or analyzed during the current study.

\bibliographystyle{abbrv}
\bibliography{MOTS}

\end{document}